\newtheorem{theorem}{Theorem}
\newtheorem{proposition}[theorem]{Proposition}
\newcommand{\eq}{\leftrightarrow}
\newcommand{\imp}{\rightarrow}
\newcommand{\et}{\wedge}
\newcommand{\vel}{\vee}
\newcommand{\Et}{\bigwedge}
\newcommand{\Vel}{\bigvee}
\newcommand{\all}{\forall}
\newcommand{\is}{\exists}
\newcommand{\Dia}{\Diamond}
\newcommand{\dia}[1]{\langle #1 \rangle}
\renewcommand{\phi}{\varphi}
\newcommand{\weg}[1]{}
\newcommand{\Formulas}{{\mathcal L}}
\newcommand{\langu}{\Formulas}
\newcommand{\Domain}{{\mathcal D}}
\newcommand{\domain}{\Domain}
\newcommand{\lang}{\langu}
\newcommand{\knows}{\Box}
\newcommand{\red}[1]{{\color{red}#1}}
\newcommand{\bisimilar}{\leftrightarrows}
\newcommand{\refines}{\leftleftarrows}
\newcommand{\simulates}{\rightrightarrows}
\renewcommand{\implies}{\rightarrow}
\renewcommand{\iff}{\leftrightarrow}
\newcommand{\suspects}{\Dia}
\newcommand{\cover}{\nabla}
\newcommand{\rmlbox}{[\text{\footnotesize{$\leftleftarrows$}}]}
\newcommand{\rmldia}{\dia{\text{\footnotesize{$\leftleftarrows$}}}}
\newcommand{\smldia}{\dia{\text{\footnotesize{$\rightrightarrows$}}}}
\newcommand{\smlbox}{[\text{\footnotesize{$\rightrightarrows$}}]}
\newcommand{\originbox}{[\circ]}
\newcommand{\agents}{A}
\newcommand{\ol}[1]{\ensuremath{\overline{#1}}}
\newcommand{\suggestion}[1]{{\color{blue}#1}}
\title{Modal Logic for Simulation, Refinement, and Mutual Ignorance}
\author{Hans van Ditmarsch
\institute{University of Toulouse\\ CNRS, IRIT, France}
\email{hansvanditmarsch@gmail.com}
\and
Tim French
\institute{University of Western Australia\\ Perth, Australia}
\email{tim.french@uwa.edu.au}
\and
Rustam Galimullin
\institute{University of Bergen\\ Norway}
\email{rustam.galimullin@uib.no}
\and
Louwe B. Kuijer 
\institute{Liverpool University\\ UK}
\email{lbkuijer@liverpool.ac.uk}
}
\begin{document}
\maketitle

\begin{abstract}
Simulation and refinement are variations of the bisimulation relation, where in the former we keep only \textbf{atoms} and \textbf{forth}, and in the latter only \textbf{atoms} and \textbf{back}. Quantifying over simulations and refinements captures the effects of information change in a multi-agent system. In the case of quantification over refinements, we are looking at all the ways the agents in a system can become more informed. Similarly, in the case of quantification over simulations, we are dealing with all the ways the agents can become less informed, or in other words, could have been less informed, as we are at liberty how to interpret time in dynamic epistemic logic. While quantification over refinements has been well explored in the literature, quantification over simulations has received considerably less attention. 
In this paper, we explore the relationship between refinements and simulations. To this end, we also employ the notion of mutual factual ignorance that allows us to capture the state of a model before agents have learnt any factual information. In particular, we consider the extensions of multi-modal logic with the simulation and refinement modalities, as well as modalities for mutual factual ignorance. We provide reduction-based axiomatizations for several of the resulting logics that are built extending one another in a modular fashion.
\end{abstract}

\section{Introduction}  \label{sec.intro}

To a multi-agent modal logic, and in particular to 
epistemic logic \cite{hintikka:1962}, we can add modalities that are quantifiers over arbitrary information change, called `refinement quantifiers'. In {\em refinement modal logic} (RML) \cite{hvdetal.loft:2009,hvdetal.felax:2010,halesetal:2011,hales:2011,bozzellietal.inf:2014,hales:2016, bozzellietal.tcs:2014,achilleosetal:2013,xingetal:2019} one can intuitively say by $\rmldia\phi$ 
that there is some public or private change of information after which the formula $\phi$ is true. It is known that it is equivalent to saying that there is an action model \cite{baltagetal:1998} such that after its execution, $\phi$ is true; as in arbitrary action model logic \cite{hales2013arbitrary}. However, note that we then do not have modalities for action models in the logical language. Instead, in RML we quantify over the refinements of a given model. From the requirements {\bf atoms}, {\bf forth} and {\bf back} of a bisimulation \cite{blackburnetal:2001}, a refinement relation only needs to satisfy {\bf atoms} and {\bf back}. The semantics of the refinement quantifier directly uses this refinement relation. 

In this contribution we investigate the dual logic, called {\em simulation modal logic} (SML), that quantifies over the simulations of a given model. Now, from the requirements of a bisimulation, a simulation relation only needs to satisfy {\bf atoms} and {\bf forth}. Just as RML describes, in an epistemic setting, all the intricate ways in which agents can become more informed about the current state of information, in simulation modal logic we describe how agents can become less informed about the current state of information. Becoming less informed can be seen as a kind of forgetting \cite{suetal:2009,langetal:2010,hvdetal.introKRA:2009,DuqueNSSV15,FangLD19,liang24}, but possibly a more appealing intuition is to imagine a prior state of information to the current one, wherein agents were still not as informed and knowledgeable as they are now: just as refinement is a kind of \emph{belief expansion}, simulation is therefore a kind of \emph{belief contraction} \cite{agm:1985}.  Refinement modal logic and simulation modal logic are dynamic epistemic logics: logics with epistemic modalities as well as dynamic modalities interpreted as model updates \cite{hvdetal.del:2007,baltagetal.hpi:2008,moss.handbook:2015}. Although there are various ways to deal with belief contraction in a modal epistemic logic \cite{demolombeetal:2003,aucher.planning:2012,hvdetal.aaai:2007}, no general mechanism for belief contraction in dynamic epistemic logic has been proposed to our knowledge. Information updates such as public announcements amount to belief expansion \cite{hvdetal.aimlbook:2005,aucher.phd:2008}, whereas dynamic belief revision, often using Kripke models with more structure such as plausibility relations or preference relations have also found widespread use \cite{jfak.jancl:2007,baltagetal.tlg3:2008,hvd.prolegomena:2005,jfaketal.handbook:2015}. There are two different ways to represent belief contraction in dynamic epistemic logic: logical languages with dynamic modalities referring to what was true in the past given a history-based semantics \cite{Sack10,BalbianiDH16,baltagetal:2022}, and logical languages with dynamic modalities updating the current model with respect to some formula parametrizing it \cite{hvdetal.introKRA:2009,DuqueNSSV15}. Examples of the former are the dynamic epistemic logic with a `yesterday' modality \cite{Sack10}, the logic of what is true before a public announcement \cite{BalbianiDH16}, and arbitrary public announcement logic with memory \cite{baltagetal:2022}. Examples of the latter are \cite{hvdetal.introKRA:2009}, merely forgetting atomic propositions, and \cite{DuqueNSSV15}, forgetting arbitrary modal formulas. One could say that \cite{DuqueNSSV15} comes closest to the status quo of belief contraction in dynamic epistemic logic. With our proposed simulation modal logic we wish to further contribute to the so far quite restricted literature on belief contraction in dynamic epistemic logic in a meaningful way: it does not merely represent what was known before an announcement, a public action, but what was known before any informative action, such as also private announcements.

Despite simulation and refinement being dual, there is still a somewhat fundamental difference between the two, that can be summarized by saying that there are far more ways to become less informed than there are ways to become more informed. For example, given a publicly announced formula $\phi$ and an epistemic model $M$ encoding what agents know, there is a unique updated model resulting from that public announcement, but there are many epistemic models such that updating them with the announcement of $\phi$ results in the model $M$.

In this contribution we also axiomatize SML. In order to axiomatize SML we follow two different paths. First, we propose an axiomatization that can be seen as the dual of that of refinement modal logic, however, with the difference that we have consistency requirements on formulas occurring in one of the axioms (the axiom involves knowledge of a disjunction of simulations of formulas which can be consistent even if one of the disjuncts is inconsistent). Second, we propose an integrated axiomatization of refinement modal logic and simulation modal logic avoiding this complication of the consistency requirement. The need for refinement modalities can be explained from the semantics: we compare any given information state (pointed Kripke model) $M_s$ wherein we wish to evaluate a formula $\smldia\phi$ (there is simulation after which $\phi$ is true) to the (pointed) model $M^\circ_t$ of (maximal) mutual factual ignorance 
(also known as the epistemic model of {\em blissful ignorance} \cite{hvdetal.ckcb:2009}). We typically do not have that $M$ is a submodel of $M^\circ$, and also not that $\phi$ is true in $M^\circ$, but that $\phi$ is true in a pointed model $M'_{s'}$ such that $M'$ is a refinement of $M^\circ$. Model $M'$ is so to speak
`in between $M$ and $M^\circ$': we then have that $M \simulates M' \simulates M^\circ$ ($M^\circ$ simulates $M'$ and $M'$ simulates $M$, or dually, $M$ refines $M'$ and $M'$ refines $M^\circ$). This complication also requires us to refer to truth in the mutual factual ignorance model $M^\circ$ by a dedicated modal construct $\originbox\psi$, for `$\psi$ was \emph{originally} true'. The consistency of constituents of $\phi$ in the above $\smldia\phi$ then translates into the requirement that these constituents of $\phi$ can become true after a refinement of the mutual factual ignorance model, meaning that they are satisfiable and therefore consistent.

In this paper we examine the relationships between the three 
modalities, $\smlbox$, $\rmlbox$ and $\originbox$, and consider their logics.
We start by recalling the established results for RML and SML, 
and present novel axiomatizations for each.
Then we consider the combination of simulation and refinement, and in so doing also introduce {\em origin modal logic} OML. 
We axiomatize OML on its own, and furthermore in combination with simulation modalities and refinement quantifiers, in the axiomatization {\bf ROSML}. It should be noted that all our results are for epistemic models with arbitrary accessibility relations for each agent, and not with equivalence relations. Although fair game, the intended epistemic models in dynamic epistemic logic are with equivalence relations, in which case the epistemic modality represents knowledge. The extension of our results to knowledge is left for future research.

\section{Syntax and Semantics}


\paragraph{Languages}

Given a non-empty countable (finite or infinite) set of {\em propositional variables} ({\em atoms}) $P$ and a non-empty finite set of {\em agents} $A$, we consider a combined language with refinement and simulation quantifiers, as well as the origin modality. The elements of the language are the {\em formulas}. 

\[ \begin{array}{lclll} 
\lang & \ni & \phi & ::= & p \mid \neg \phi \mid (\phi \et \phi) \mid \Box_a \phi  \mid \rmlbox\phi \mid \smlbox\phi \mid \originbox \phi 
\end{array}
\] 
where $p \in P$, $a \in A$. The language $\lang^{\Box\refines}$ is the fragment with only $\Box_a$ and $\rmlbox$ modalities, the language $\lang^{\Box\simulates}$ is the fragment with only $\Box_a$ and $\smlbox$ modalities, the language $\lang^{\Box\circ}$ is the fragment with only $\Box_a$ and $\originbox$ modalities, and $\lang^\Box$ is the fragment with only the $\Box_a$ modalities. The propositional fragment is $\lang_0$. Other propositional connectives are defined by abbreviation, and we also let $\Dia_a \phi:=\neg\Box_a\neg\phi$. In an epistemic setting, for $\Dia_a \phi$ we read `agent $a$ considers $\phi$ possible', and for $\Box_a \phi$, `agent $a$ knows $\phi$'. 
For $\rmlbox\phi$ we read `after any refinement, $\phi$', and for $\smlbox\phi$, `after any simulation, $\phi$'. We also define $\rmldia\phi$ as $\neg\rmlbox\neg\phi$, and $\smldia\phi$ as $\neg\smlbox\neg\phi$. For $\originbox\phi$ we read `originally $\phi$ (was true)'. We will call $\originbox$ the \textit{origin modality}. Finally, the \textit{cover} operator $\cover_a\Phi$, where $\Phi$ is a finite set of formulas, is $\Et_{\phi\in\Phi} \Dia_a \phi \et \Box_a \Vel_{\phi\in\Phi} \phi$.

Given a formula $\phi$, 
$d(\phi)$ is the {\em modal depth} or $\Box$-depth of $\phi$. 
The definition of $\Box$-depth is: $d(p) = 0$, $d(\neg\phi) = d(\rmlbox\phi) = d(\smlbox\phi)= d(\originbox \phi)= d(\phi)$, $d(\phi\et\psi) = \max \{d(\phi),d(\psi)\}$, $d(\Box_a\phi) = d(\phi)+1$.

\paragraph{Structures}

An {\em epistemic model} (or `Kripke model', or just `model') $M = (S, R, V)$ consists of a non-empty countable {\em domain} $S$ of {\em states} (or `worlds'; the domain is also denoted $\domain(M)$), an {\em accessibility function} $R: A \imp {\mathcal P}(S \times S)$, where each $R(a)$ is an {\em accessibility relation}, and a {\em valuation} $V: S \imp {\mathcal P}(P)$; each $V(s)$ is the set of atoms that are true in $s$. 
For $s \in S$, a pair $(M,s)$, for which we write $M_s$, is a {\em pointed (epistemic) model}. 
The model class without any restrictions is K. The class of models where all accessibility relations are equivalence relations is S5. 

The \emph{mutual factual ignorance model} $M^\circ$ is defined as $M = (S^\circ, R^\circ, V^\circ)$, where $S^\circ$ is the set of all subsets of $P$, $R^\circ_a = S^\circ \times S^\circ$, and  $V(s)=s$ for each $s \in S^\circ$. 
Given that $P$ is countably infinite, 
$|S^\circ|$ is not countable. 


Let models $M= (S, R, V)$ and $M'= (S', R', V')$ be given. A non-empty relation $Z \subseteq S \times S'$ is a \emph{bisimulation} between $M$ and $M'$, notation $Z: M \bisimilar M'$, if for all pairs $(s,s') \in
Z$ 
and $a\in A$:
\begin{description}
\item[atoms] $V(s)=V'(s')$;
\item[forth] if
$R_ast$, then there is a $t'\in
S'$ such that $R'_as't'$ and
$Ztt'$;
\item[back] if $R'_as't'$, then there is a
$t \in S$ such that $R_ast$
and $Ztt'$.
\end{description}
We write $M \bisimilar M'$ if there is a bisimulation between $M$ and $M'$, and we write $M_s \bisimilar M'_{s'}$ if there is a bisimulation between $M$ and $M'$ containing pair $(s,s')$. 

Similarly, a {\em simulation} satisfies {\bf atoms} and {\bf forth}, where we write $M \simulates M'$ if there is a simulation between $M$ and $M'$, and $M_s \simulates M'_{s'}$ if it contains pair $(s,s')$; whereas its dual, the {\em refinement}, satisfies {\bf atoms} and {\bf back}, and where we write $M \refines M'$ if there exists a refinement between $M$ and $M'$, and $M_s \refines M'_{s'}$ if it contains pair $(s,s')$. 
Further, if $M \refines M'$, the model $M'$ is  called a refinement of $M$ as well. It is sometimes confusing that in the literature the term `refinement' denotes both the refinement relation linking $M$ and $M'$, and the refined model $M'$.



\paragraph{Semantics}

Assume an epistemic model $M = (S, R, V)$, and let $s \in S$. We define $M_s \models \phi$ (for: $M_s$ \emph{satisfies} $\phi$, or $\phi$ is \emph{true} in $M_s$) by induction.
$$\begin{array}{lcl}
M_s \models p &\mbox{ \ iff \ } & p \in V(s) \\ 
M_s \models \neg \phi &\mbox{iff} & M_s \not \models \phi \\ 
M_s \models \phi \et \psi &\mbox{iff} & M_s \models \phi  \text{ and } M_s \models \psi \\  
M_s \models \Box_a \phi &\mbox{iff} & M_t  \models \phi \text{ for all }  t \in S \text{ such that } (s,t) \in R_a \\
M_s \models \rmlbox \phi & \mbox{iff} & M'_{s'} \models \phi \text{ for all } M'_{s'} \text{ such that } M_s \refines M'_{s'} \\
M_s \models \smlbox \phi & \mbox{iff} & M'_{s'} \models \phi \text{ for all } M'_{s'} \text{ such that } M_s \simulates M'_{s'} \\
M_s \models \originbox \phi & \mbox{iff} & M^\circ_{V(s)} \models \phi
\end{array} $$

Let us give an example. 
A refinement of a given model need not be a submodel and a simulation of a given model need not contain the given model.  Consider the models depicted below. Assume that $i$ and $-i$ states have the same valuation of atoms.
\begin{center}
\begin{tikzpicture}[->]
\node (m) at (5,0) {$M$};
\node (10) at (1,0) {$\bullet$};
\node (20) at (2,0) {$\bullet$};
\node (30) at (3,0) {$\bullet$};
\node (40) at (4,0) {$\bullet$};
\draw (10) -> (20);
\draw (20) -> (30);
\draw (30) -> (40);
\node (mp) at (5,-.6) {$M'$};
\node (10b) at (1,-.6) {$\bullet$};
\node (20b) at (2,-.6) {$\bullet$};
\node (30b) at (3,-.6) {$\bullet$};
\draw (10b) -> (20b);
\draw (20b) -> (30b);
\node (mpp) at (5,-1.2) {\pmb{$M''$}};
\node (00bb) at (0,-1.2) {$\bullet$};
\node (10bb) at (1,-1.2) {$\bullet$};
\node (20bb) at (2,-1.2) {$\bullet$};
\node (30bb) at (3,-1.2) {$\bullet$};
\draw[very thick] (10bb) -> (00bb);
\draw[very thick] (10bb) -> (20bb);
\draw[very thick] (20bb) -> (30bb);
\node (mppp) at (5,-1.8) {$M'''$};
\node (10bbb) at (1,-1.8) {$\bullet$};
\node (20bbb) at (2,-1.8) {$\bullet$};
\node (30bbb) at (3,-1.8) {$\bullet$};
\node (40bbb) at (4,-1.8) {$\bullet$};
\node (20n) at (0,-1.8) {$\bullet$};
\node (30n) at (-1,-1.8) {$\bullet$};
\node (40n) at (-2,-1.8) {$\bullet$};
\draw (10bbb) -> (20bbb);
\draw (20bbb) -> (30bbb);
\draw (30bbb) -> (40bbb);
\draw (10bbb) -> (20n);
\draw (20n) -> (30n);
\draw (30n) -> (40n);
\node (00x) at (1,-2.8) {$0$};
\node (10x) at (2,-2.8) {$1$};
\node (20x) at (3,-2.8) {$2$};
\node (30x) at (4,-2.8) {$3$};
\node (10nx) at (0,-2.8) {$-1$};
\node (20nx) at (-1,-2.8) {$-2$};
\node (30nx) at (-2,-2.8) {$-3$};

\end{tikzpicture}
\end{center}
We note that $M'$ is a submodel (model restriction) of $M$ and $M''$ is a refinement of $M$ that is not a submodel. However, $M''$ is a submodel of $M'''$ and $M'''$ is a bisimilar copy of $M$. Simulations are duals of refinement. So, $M$ is a simulation of $M''$ even though $M''$ is not a submodel of $M$.

In an epistemic setting, one could imagine a two-agent situation where $a$ and $b$ both know that a propositional variable $p$ is true, but are both uncertain whether the other one knows, as encoded in the model $M$ below in state $s$. A simulation of this is the two-state mutual factual ignorance model $M^\circ$. It is clear that $M^\circ$ does not contain $M$. But $M$ is a refinement of bisimilar copy $M''$ of $M^\circ$. The model $M^\circ$ can be seen as a previous state of information that $a$ and $b$ were in, where they were both informed that $p$ is true while they remained both uncertain whether the other was also informed.

\medskip

\begin{tikzpicture}
\node (m) at (-1,0) {$M:$};
\node (0) at (0,0) {$\neg p$};
\node (1) at (2,0) {$p$};
\node (2) at (4,0) {$p$};
\node (2a) at (4,0.3) {$s$};
\node (3) at (6,0) {$p$};
\node (4) at (8,0) {$\neg p$};
\draw[-] (0) -- node[above] {$a$} (1);
\draw[-] (1) -- node[above] {$b$} (2);
\draw[-] (2) -- node[above] {$a$} (3);
\draw[-] (3) -- node[above] {$b$} (4);
\end{tikzpicture}

\begin{tikzpicture}
\node (m) at (-1,0) {$M'':$};
\node (0) at (0,0) {$\neg p$};
\node (1) at (2,0) {$p$};
\node (2) at (4,0) {$p$};
\node (2a) at (4,0.3) {$s$};
\node (3) at (6,0) {$p$};
\node (4) at (8,0) {$\neg p$};
\draw[-] (0) -- node[above] {$ab$} (1);
\draw[-] (1) -- node[above] {$ab$} (2);
\draw[-] (2) -- node[above] {$ab$} (3);
\draw[-] (3) -- node[above] {$ab$} (4);
\end{tikzpicture}

\begin{tikzpicture}
\node (m) at (-5,0) {$M^\circ:$};
\node (0) at (0,0) {$p$};
\node (0a) at (0,0.3) {$s$};
\node (1) at (2,0) {$\neg p$};
\draw[-] (0) -- node[above] {$ab$} (1);
\end{tikzpicture}



\section{Simulation, Refinement, and Mutual Factual Ignorance}\label{sec:examples}

\paragraph{Refinement}
We review some results obtained for refinement modal logic over the years for which we present a novel axiomatization in the next section.

In \cite{bozzellietal.inf:2014} and most other works on RML, instead of the refinement relation as we defined above (for the set of all agents), the refinement relation has a parameter $G \subseteq A$, where apart from {\bf atoms} and {\bf back} for all agents, additionally, {\bf forth} is required for the agents in $A \setminus G$. Here we only consider refinement for the set $A$ of all agents, so that the only requirement is {\bf atoms} and {\bf back} for all agents. This simplifies the presentation. 

As the refinement relation is transitive, reflexive, and confluent (for all $x,y,z$, if $x \refines y$ and $x \refines z$, there is $w$ such that $y \refines w$ and $z \refines w$), and we additionally have atomicity (there is a maximal refinement: remove all relations!), we  have for the refinement modality the corresponding validities:
\begin{itemize}
\item $\rmlbox \phi \imp \phi$ \quad ({\bf T}$^\refines$)
\item $\rmldia \rmldia \phi \imp \rmldia \phi$ \quad ({\bf 4}$^\refines$) 
\item $\rmldia \rmlbox \phi \imp \rmlbox \rmldia \phi$ \quad (Church-Rosser) / {\bf CR}$^\refines$
\item $\rmlbox \rmldia \phi \imp \rmldia \rmlbox \phi$ \quad (McKinsey) / {\bf MK}$^\refines$ 
\end{itemize} 

The requirement of {\bf atoms} and {\bf back} entails that a refinement of a given model is a restriction of a bisimilar copy of that model. Analogously, in the logical language, refinement quantification is bisimulation quantification followed by relativization. In {\em bisimulation quantified} logics we have explicit quantifiers $\all p$ over propositional variables $p$ \cite{french:2006,visser:1996}. {\em Relativization} $\phi^p$ of a formula $\phi$ to an atomic proposition $p$ is a syntactic way to describe model restrictions, such as the consequences of a public announcement: $(M|p)_s \models \phi$ iff $M_s \models \phi^p$, which in public announcement logic is the same as $M_s \models \dia{p}\phi$, see \cite{milleretal:2005,jfak.book:2011} (here $M|p$ is the restriction of the model to the state where $p$ is true, and $\dia{p}\phi$ means `$p$ is true and after its announcement $\phi$ is true'). In RML, given $\phi \in \lang^\Box$, we have \cite[Sect.\ 4.3]{bozzellietal.inf:2014}: \[ \rmldia\phi \text{ is equivalent to } \is p \dia{p}\phi \] 
Furthermore, the refinement quantifier can be seen as a quantification over {\em action models} \cite{baltagetal:1998}. Already on the level of the semantics we have that:
\begin{quote} {\em Executing an action model produces a refinement of the initial model, and for every refinement of a finite model there is an action model producing it.} \end{quote}
On the level of the language we can, given $\rmldia\phi$, synthesize an action model $U$ with the same update effect (such that $\dia{U}\phi$ is true). For details, see \cite{hales2013arbitrary} and the related (for a different but comparable update mechanism, called `arrow update') \cite{hvdetal.aus:2020}.

The axiomatization of RML reported in \cite{bozzellietal.inf:2014} is in terms of parametrized refinement quantifiers $\dia{\refines_a}$ for $a \in A$ (interpreting $a$-refinement relations $\refines_a$) with which one can also define $\dia{\refines_G}$ for $G \subseteq A$. We recall that we now only consider $\refines$, that is, $\refines_A$). We therefore wish to report in this contribution an alternative axiomatization {\bf RML} for RML with only refinement quantifiers $\refines$ (Section~\ref{sec:axiomatization}). It resembles the quantifier part of the axiomatization of arbitrary arrow update model logic \cite{hvdetal.aus:2020}, and was reported in \cite{Ditmarsch23} without proof of soundness and completeness. 
It is simpler than the axiomatization in \cite{bozzellietal.inf:2014} and is also based on reduction axioms. 

\paragraph{Simulation}

Perhaps the most relevant work here is \cite{xingetal:2019}, where the authors explore the logic that  combines refinement and simulation in a single modality called covariant-contravariant refinement (modality). It is inspired by process calculi and not by epistemic modalities and updates. 
As in \cite{bozzellietal.inf:2014}, these modalities are parameterized by the subgroups of the set of all agents for which the refinement (resp.\ simulation) relation only needs to satisfy \textbf{back} (resp.\ \textbf{forth}). In this work, we only refine or simulate by the set of all agents, thus getting a more succinct formalization \textbf{SML}, presented in Section~\ref{sec:axiomatization}. 
In that section we furthermore present the axiomatization \textbf{ROSML}, that is truly novel.

An aspect of \textbf{SML} that might be criticized is the somewhat extra-logical (or at least uncommon in a Hilbert-style axiomatization) requirement of consistency of the formulas appearing in the cover. 
Proofs and algorithms are different beasts, but it is an aesthetic principle that determining whether a rule or axiom is applicable in a proof should be self-evident; at least in the sense that it can be determined in linear time.
We get rid of this problem in the \textbf{ROSML} axiomatization by a more complex reduction also involving origin modalities and refinement modalities but having no extra-logical requirements. 


Just as the refinement relation, the simulation relation is also transitive, reflexive, and confluent, and again we have atomicity (the maximal simulation brings you in the mutual factual ignorance model). We therefore  have for the simulation modality the following validities.
\begin{restatable}{proposition}{propSIMform} 
The following are valid: 
\begin{enumerate}
\item $\smlbox \phi \imp \phi$ \quad ({\bf T}$^\simulates$)
\item $\smldia \smldia \phi \imp \smldia \phi$ \quad ({\bf 4}$^\simulates$) 
\item $\smldia \smlbox \phi \imp \smlbox \smldia \phi$ \quad (Church-Rosser) / {\bf CR}$^\simulates$
\item $\smlbox \smldia \phi \imp \smldia \smlbox \phi$ \quad (McKinsey) / {\bf MK}$^\simulates$ 
\end{enumerate} 
\end{restatable}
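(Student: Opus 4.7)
The plan is to prove each item by a direct semantic argument, reading off the axiom from the corresponding structural property of $\simulates$ that was just enumerated: reflexivity for $\mathbf{T}^\simulates$, transitivity for $\mathbf{4}^\simulates$, confluence for $\mathbf{CR}^\simulates$, and atomicity (with $M^\circ_{V(s)}$ as the maximum) for $\mathbf{MK}^\simulates$. This is the standard modal-frame-correspondence recipe, here applied to the simulation order on pointed models.

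Items (1)--(3) should be routine. For (1), the reflexive witness $M_s \simulates M_s$ is itself one of the universally quantified targets, so $M_s \models \smlbox\phi$ immediately forces $M_s \models \phi$. For (2), given witnesses $M_s \simulates M'_{s'}$ and $M'_{s'} \simulates M''_{s''}$ with $M''_{s''} \models \phi$, transitivity yields the single simulation $M_s \simulates M''_{s''}$ that witnesses $\smldia\phi$. For (3), take an $\smldia$-witness $M_s \simulates M'_{s'}$ of $\smldia\smlbox\phi$, pick any $M''_{s''}$ with $M_s \simulates M''_{s''}$, and apply confluence to produce a common $M'''_{s'''}$ above both; then $M'_{s'} \models \smlbox\phi$ propagates $\phi$ down to $M'''_{s'''}$, and since $M''_{s''} \simulates M'''_{s'''}$ this gives $M''_{s''} \models \smldia\phi$, whence $M_s \models \smlbox\smldia\phi$.

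The real work lives in (4). The intended witness for the outer $\smldia$ on the right is $M^\circ_{V(s)}$: atomicity gives $M_s \simulates M^\circ_{V(s)}$, so the premise $\smlbox\smldia\phi$ at $M_s$ delivers $M^\circ_{V(s)} \models \smldia\phi$. The task is then to upgrade this to $M^\circ_{V(s)} \models \smlbox\phi$, i.e.\ to show that $\phi$ cannot change truth across any further simulation of $M^\circ_{V(s)}$. Because arbitrary $\phi \in \lang$ may contain negative occurrences of $\Box_a$ or of the quantifier modalities, the routine preservation of positive formulas by $\simulates$ is not on its own enough; the proof will need a dedicated ``modal finality'' lemma for the mutual factual ignorance model, exploiting both that it is the maximum in $\simulates$ and that it is saturated (every atomic valuation is realised in every successor cone), most plausibly proved by induction on $\phi$ so that each negative modality reduces to something already determined by $M^\circ$ itself. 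Formulating and establishing that finality lemma is where I expect the main obstacle to sit.
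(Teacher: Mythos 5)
Items (1) and (2) of your proposal coincide with the paper's argument (reflexivity and transitivity of $\simulates$, applied to explicit witnesses). Item (3) is also correct but takes a genuinely different route: you use confluence of the simulation preorder directly, in the standard frame-correspondence style, whereas the paper routes the argument through the top element, deriving $M^\circ_{V(s')} \models \phi$ from $M'_{s'} \models \smlbox\phi$ and then observing that every simulation $M''_{s''}$ of $M_s$ has $M^\circ_{V(s)}$ as a further simulation. Both are sound; in fact the confluence you appeal to is itself most easily verified via that same top element, so the two arguments are close cousins.

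The genuine gap sits exactly where you predicted: item (4). Your reduction is fine as far as it goes --- take $M^\circ_{V(s)}$ as the witness for the outer $\smldia$, extract $M^\circ_{V(s)} \models \smldia\phi$ from the premise, and then try to upgrade this to $M^\circ_{V(s)} \models \smlbox\phi$. But the ``modal finality'' lemma you defer to cannot be established, because it is false: although $M^\circ_{V(s)}$ is a maximum of $\simulates$, it is not final up to modal equivalence. The \textbf{forth} clause only forces a simulation target to preserve existing transitions; it does not forbid new ones, so a simulation of $M^\circ_{V(s)}$ may introduce fresh dead-end successors. Concretely, let $N$ be $M^\circ$ together with one extra state $w$ having no outgoing edges and an edge from $V(s)$ to $w$; the identity on $S^\circ$ is a simulation, yet $N_{V(s)} \models \Dia_a\Box_a\bot$ while $M^\circ_{V(s)} \models \Box_a\Dia_a\top$. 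Worse, the same two constructions applied to an arbitrary pointed model (adding a dead-end successor of the point, respectively adding a reflexive sink reachable from every state) show that $\smldia\Dia_a\Box_a\bot$ is valid while $\smlbox\Dia_a\Box_a\bot$ is unsatisfiable; hence $\smlbox\smldia\Dia_a\Box_a\bot$ is valid and $\smldia\smlbox\Dia_a\Box_a\bot$ is not satisfiable, so $\phi = \Dia_a\Box_a\bot$ refutes {\bf MK}$^\simulates$ as stated. You should know that the paper's own proof of item (4) silently assumes the very lemma you flagged, twice (``from \dots the reflexivity of the simulation relation we get $M^\circ_{s^\circ}\models\varphi$'', and ``as the mutual factual ignorance model is a simulation of itself, we trivially get $M^\circ_{s^\circ}\models\smlbox\varphi$''), and the refinement analogue does not transfer: the maximal refinement (the isolated dead end) genuinely has no non-trivial refinements, whereas the maximal simulation $M^\circ$ has many non-equivalent simulations. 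So your instinct to isolate the finality lemma, and your refusal to wave it through, is exactly right; the lemma --- and with it item (4) --- does not hold for arbitrary $\phi$.
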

\begin{proof}
Let $M = (S,R,V)$ and $s \in S$ be given. 

\begin{enumerate}
\item Assume $M_s \models \smlbox \phi$. As $M \simulates M$ (the relation $\simulates$ is reflexive), we directly obtain that $M_s \models \phi$. Therefore $\models\smlbox \phi \imp \phi$.
\item Assume $M_s \models \smldia \smldia \phi$. Then there are $M'$ and $M''$ with $s' \in \domain(M')$ and $s'' \in \domain(M'')$ such that $M_s \simulates M'_{s'}$ and $M'_{s'} \simulates M''_{s''}$, and such that $M''_{s''} \models \phi$. From $M_s \simulates M'_{s'}$ and $M'_{s'} \simulates M''_{s''}$ we directly obtain $M_s \simulates M''_{s''}$ as the simulation relation is transitive. Therefore $M_s \models \smldia\phi$, so that $\models \smldia \smldia \phi \imp \smldia \phi$.
\item 
Assume that $M_s \models \langle \rightrightarrows \rangle [\rightrightarrows]\varphi$ 
Then, by the definition of semantics, there is an $M'_{s'}$ such that $M_s \rightrightarrows M'_{s'}$ and  $M'_{s'} \models [\rightrightarrows]\varphi$. The latter implies, again by the definition of semantics, that $M^\circ _{s^\circ} \models \varphi$. Since $M^\circ$ is a simulation of every model, we have that  $M^{''}_{s^{''}} \models \langle \rightrightarrows \rangle\varphi$ for an arbitrary $M^{''}_{s^{''}}$ such that $M_s \rightrightarrows M^{''}_{s^{''}}$. Since $M^{''}_{s^{''}}$ was arbitrary, the definition of the semantics implies that $M_s \models [\rightrightarrows] \langle \rightrightarrows \rangle \varphi$.

\item  
Assume that $M_s \models [\rightrightarrows] \langle \rightrightarrows \rangle \varphi$. 
From the definition of semantics and the reflexivity of the simulation relation we get $M^\circ_{s^\circ} \models \varphi$. As the mutual factual ignorance model is a simulation of itself, we trivially get $M^\circ_{s^\circ} \models [\rightrightarrows]\varphi$. Finally, since $M_s \rightrightarrows M^\circ_{s^\circ}$, we have that $M_s \models \langle \rightrightarrows \rangle [\rightrightarrows]\varphi$. 
\end{enumerate}
\end{proof}

While refinement preserves the positive formulas, or hard knowledge \cite{bozzellietal.inf:2014},
simulation preserves ignorance. Consider the following, \textit{negative}, fragment $\lang^{\Box}_-$ of $\lang^{\Box}$: 
\[ \phi ::= p \mid \neg p \mid \phi \vel \psi \mid \phi \et \phi \mid \Dia_a \phi \]

\begin{proposition}
For all $\phi \in \lang^{\Box\simulates}_-$ and models $M_s$, if $M_s \models \phi$ and $M_s \simulates M'_{s'}$, then $M'_{s'} \models \phi$. 
\end{proposition}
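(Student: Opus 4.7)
The plan is to proceed by structural induction on $\phi \in \lang^{\Box\simulates}_-$, using exactly the two conditions that a simulation is required to satisfy, namely \textbf{atoms} and \textbf{forth}. Throughout, I would fix a simulation $Z$ between $M$ and $M'$ with $(s,s') \in Z$, and carry this pair (updated to relevant successor pairs) along as the induction descends into subformulas.

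For the base cases $\phi = p$ and $\phi = \neg p$, \textbf{atoms} gives $V(s) = V'(s')$, so the literal keeps the same truth value in $M'_{s'}$. The Boolean cases $\phi_1 \vel \phi_2$ and $\phi_1 \et \phi_2$ are immediate from the inductive hypothesis applied at $(s,s')$. For $\phi = \Dia_a \psi$, a witness $t$ with $R_a(s,t)$ and $M_t \models \psi$ is matched by some $t'$ with $R'_a(s',t')$ and $(t,t') \in Z$ via \textbf{forth}, hence $M_t \simulates M'_{t'}$, and the inductive hypothesis yields $M'_{t'} \models \psi$, so $M'_{s'} \models \Dia_a \psi$. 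Note that it is exactly the absence of $\Box_a$ (which would require \textbf{back}) from the negative fragment that makes this step go through.

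If $\lang^{\Box\simulates}_-$ furthermore contains $\smldia\psi$, this case needs more than the defining clauses of simulation: given a witness $M''_{s''}$ with $M_s \simulates M''_{s''}$ and $M''_{s''} \models \psi$, I must transport it across $M_s \simulates M'_{s'}$. The tool is confluence of $\simulates$, established in the preceding proposition, which yields a common simulant $M'''_{s'''}$ with $M'_{s'} \simulates M'''_{s'''}$ and $M''_{s''} \simulates M'''_{s'''}$. The inductive hypothesis applied along $M''_{s''} \simulates M'''_{s'''}$ gives $M'''_{s'''} \models \psi$, so $M'_{s'} \models \smldia\psi$. This $\smldia$ case is the only genuinely interesting step; everything else is routine modal-logic induction that mirrors standard preservation arguments for simulations on pure modal languages. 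The conceptual kernel is that literals and $\Dia_a$ are handled by \textbf{atoms} and \textbf{forth} directly, whereas $\smldia$ is handled by the derived closure property of confluence.
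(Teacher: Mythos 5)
Your proof is correct and takes essentially the same route as the paper's: a structural induction in which \textbf{atoms} handles the literals and \textbf{forth} handles $\Dia_a$, with the Boolean cases immediate. The extra $\smldia$ case is not needed, since the negative fragment as defined in the paper ($\phi ::= p \mid \neg p \mid \phi \vel \psi \mid \phi \et \phi \mid \Dia_a \phi$) contains no $\smldia$ clause, but your treatment of it via confluence of $\simulates$ is sound and would be the right move if the fragment were so extended.
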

\begin{proof}
    The proof is by induction on the structure of $\varphi$, where the base case, $\varphi = p$ and $\varphi = \lnot p$, is immediate from the fact that simulations preserve the valuation of the original states. 
    Boolean cases follow by the induction hypothesis. 

    \textit{Case} $\varphi = \Diamond_a \psi$. Assume that $M_s \models \Diamond_a \psi$. This means that there is an $a$-reachable state $t$ such that $M_t \models \psi$. By the induction hypothesis, we have that $M'_{t'} \models \psi$. Finally, since there is a simulation $(s,s')\in Z$, we have that $(s',t') \in R'(a)$ with $M'_{t'} \models \psi$, and hence $M'_{s'} \models \Diamond_a \psi$.
\end{proof}

\paragraph{Mutual Factual Ignorance}  

The concept of mutual factual ignorance works as an epistemic reset; it describes a state where no agent has any {\em factual} knowledge, and this lack of factual knowledge is {\em common} knowledge.
Such a state is refered to as {\em mutual factual ignorance}. Reaching this state of information can be seen as a form of epistemic update: ``If we were all to discount all our factual knowledge, then $\phi$ would be true''.

As the mutual factual ignorance model $M^\circ$ serves as a kind of \textit{tabula rasa} for agents, any model where agents have some factual information is a refinement of $M^\circ$.

\begin{proposition}\label{prop:MI2ref}
Every model is a refinement of mutual factual ignorance model $M^\circ$.
\end{proposition}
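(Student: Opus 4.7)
The plan is to exhibit an explicit refinement relation $Z$ between $M^\circ$ and $M$ and then verify the two defining clauses, \textbf{atoms} and \textbf{back}. Given an arbitrary model $M = (S, R, V)$, the natural candidate is
\[ Z := \{(V(s), s) : s \in S\} \subseteq S^\circ \times S. \]
This is well-defined because $V(s) \in \powerset{P} = S^\circ$ for every $s \in S$, and non-empty because the domain $S$ is non-empty.

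First I would check \textbf{atoms}: for any pair $(V(s), s) \in Z$ the clause demands $V^\circ(V(s)) = V(s)$, which is immediate from the stipulation $V^\circ(t) = t$ for every $t \in S^\circ$. Next I would check \textbf{back}: given $(V(s), s) \in Z$ and any $R_a$-successor $t$ of $s$ in $M$, I would take $V(t) \in S^\circ$ as the witness. Then $(V(t), t) \in Z$ by construction, and $R^\circ_a V(s) V(t)$ holds automatically since $R^\circ_a = S^\circ \times S^\circ$ is the universal relation.

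There is no substantial obstacle: the argument succeeds precisely because $M^\circ$ carries the maximal possible accessibility relations, which trivialises the back clause, while its state space simultaneously exhausts every possible valuation, so that $(V(s), s)$ is always a legal pair. This is exactly what places $M^\circ$ at the top of the refinement preorder. A small technical caveat worth noting is that, strictly speaking, $M^\circ$ is not a model in the paper's sense when $P$ is infinite (its domain is then uncountable), but this does not interfere with either the definition of $Z$ or the verification of the two clauses.
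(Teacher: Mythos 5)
Your proof is correct and takes essentially the same route as the paper's: both define the relation $Z = \{(V(s),s) \mid s \in S\}$ and verify \textbf{atoms} and \textbf{back}, with your version simply spelling out the verification that the paper leaves as ``easy to check.'' Your closing caveat about $S^\circ$ being uncountable when $P$ is infinite is one the paper itself acknowledges, and it does not affect the argument.
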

\begin{proof}
    Take an arbitrary model $M = (S,R,V)$. From the definition of $M^\circ$, it is clear that we can identify every $s\in S$ with $V(s) \in S^\circ$ satisfying the same propositional variables. 
    Having the refinement relation $Z = \{(V(s),s) \mid s\in S\}$, it is easy to check that $M$ is a refinement of $M^\circ$.  
\end{proof}

As a direct corollary of Proposition~\ref{prop:MI2ref}, we have the fact that if a formula is true in all refinements of the mutual factual ignorance model, then the formula is valid.

\begin{proposition}
Given some $\phi$, if for every $P'\subseteq P$, $M^\circ_{P'}\models \rmlbox \phi$, then $\phi$ is valid.
\end{proposition}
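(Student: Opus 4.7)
The plan is to apply Proposition~\ref{prop:MI2ref} pointwise and then unfold the semantics of $\rmlbox$. The key observation is that the refinement constructed in the proof of Proposition~\ref{prop:MI2ref} not only connects the models globally, but for each state $s$ of an arbitrary model $N=(S,R,V)$ identifies a specific point of $M^\circ$ — namely $V(s)\in S^\circ$ — that refines to $N_s$. So the hypothesis applied at this specific point will yield exactly what we need.

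In more detail, I would take an arbitrary pointed model $N_s$ with $N=(S,R,V)$ and set $P' := V(s)$. Since $V(s)\subseteq P$, the set $P'$ is a state of $M^\circ$. Inspecting the refinement relation $Z=\{(V(t),t)\mid t\in S\}$ exhibited in the proof of Proposition~\ref{prop:MI2ref}, we see that $(P',s)=(V(s),s)\in Z$, so $M^\circ_{P'}\refines N_s$. The hypothesis gives $M^\circ_{P'}\models\rmlbox\phi$, and unfolding the semantics of the refinement modality at $M^\circ_{P'}$ with the witness $N_s$ yields $N_s\models\phi$. Since $N_s$ was arbitrary, $\phi$ is valid.

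There is essentially no obstacle here; the argument is a one-line consequence of Proposition~\ref{prop:MI2ref} once one notices that the refinement constructed there is canonical and allows one to pick, for any target state $s$, a preimage point in $M^\circ$ whose valuation matches $V(s)$. The only thing that deserves care is not to confuse the global refinement $M^\circ\refines N$ with the pointed refinement $M^\circ_{P'}\refines N_s$, since the semantics of $\rmlbox$ quantifies over refinements of the evaluation point and we must be sure that the right point of $M^\circ$ is being used.
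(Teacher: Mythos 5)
Your argument is correct and is precisely the one the paper intends: the paper presents this proposition as a direct corollary of Proposition~\ref{prop:MI2ref} without spelling out a proof, and your write-up simply makes explicit the pointed-refinement step $M^\circ_{V(s)}\refines N_s$ via the relation $Z=\{(V(t),t)\mid t\in S\}$ that the paper leaves implicit. No gap; your care in distinguishing the global refinement from the pointed one at $P'=V(s)$ is exactly the point that needed attention.
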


All relations $R^\circ_a$ in the mutual factual ignorance model are total, and therefore equivalence relations. In the axiomatization \textbf{OML} of OML, this appears from the axioms involving the origin modality such as $[\circ](\Box_a\phi \imp \phi)$, that formalizes that ``given mutual factual ignorance, epistemic modalities $\Box_a$ are \textbf{S5}''.



%
%
%
%

\section{Axiomatizations}\label{sec:axiomatization}

In this section we present the axiomatizations side by side and show their soundness and completeness.

\subsection{Proof Systems} \label{sbsect:proof}

\paragraph{Modal Logic}
Let us start with the basic axiomatization for the language $\lang^{\Box}$, which we call {\bf ML}:
\[\begin{array}{ll}
{\bf Prop} & \text{all substitution instances of tautologies of propositional logic} \\
{\bf K} & \Box_a(\phi \rightarrow \psi)\rightarrow (\Box_a\phi\rightarrow \Box_a\psi)\\
{\bf MP} & \text{from } \phi\rightarrow \psi \text{ and } \phi \text{ infer }  \psi\\
{\bf N} & \text{from } \phi \text{ infer } \Box_a\phi\\
{\bf RE} & \text{from } \chi\eq\psi \text{ infer } \phi[\chi/p] \eq \phi[\psi/p]\\
\end{array}
\]
This forms a foundation for the axioms that follow and its soundness and completeness 
for the class of all Kripke models is well established (see for example \cite{blackburnetal:2001}).

\paragraph{Refinement Modal Logic}

Let ${\bf REF}$ be the following set of axioms:
\[
\begin{array}{ll}
{\bf RQ1} & 
\rmldia \phi_0 \leftrightarrow \phi_0 \hspace{4cm} \hfill \text{where } \phi_0\in \lang_0\\
{\bf RQ2} & 
\rmldia (\phi \vee \psi)\leftrightarrow (\rmldia \phi \vee \rmldia \psi)\\
{\bf RQ3} & 
\rmldia (\phi_0 \wedge \phi)\leftrightarrow (\phi_0\wedge \rmldia\phi) \hfill \text{where } \phi_0\in \lang_0\\
{\bf RQ4} & 
\rmldia\Et_{a\in A}\cover_a \Phi_a\iff\bigwedge_{a\in\agents}\bigwedge_{\phi\in\Phi_a}\lozenge_a\rmldia\phi\\
\end{array} \]
The axiomatization for refinement modal logic is then {\bf RML} = {\bf ML} + {\bf REF}, where axioms and rules of \textbf{ML} are applied to $\lang^{\Box\refines}$. It is based on the axiomatization of arbitrary arrow update model logic in \cite{hvdetal.aus:2020}. 

As an example, 
we show that 
rule ${\bf AR}$
: `$\text{from } \phi \rightarrow \psi \text{ infer } \rmldia \phi\rightarrow \rmldia \psi$', is derivable.

\begin{center}
\begin{tabular}{lllclll}
1. & $\phi \rightarrow \psi$ & \text{Given}&\quad&
2. & $(\phi\vee \psi)\leftrightarrow \psi$ & \textbf{Prop} and 1\\
3. & $\rmldia (\phi\vee\psi) \leftrightarrow \rmldia \psi$ & \textbf{RE}&\quad&
4. & $\rmldia (\phi\vee\psi) \leftrightarrow (\rmldia\phi \vee \rmldia\psi)$ & \textbf{RQ2}\\
5. & $(\rmldia\phi \vee \rmldia\psi)\leftrightarrow \rmldia\psi$ & \textbf{MP} and 3,4&\quad&
6. & $\rmldia\phi\rightarrow \rmldia\psi$ & \textbf{Prop} and 5
\end{tabular}
\end{center}

\paragraph{Origin Modal Logic}

Origin modal logic is quite novel, and correspondingly, has a novel set of axioms.
For origin modal logic we start with the set {\bf ORI}, given by
\[\begin{array}{ll}
{\bf O1} & \originbox \phi_0 \leftrightarrow \phi_0 \hspace{4cm} \hfill \text{where } \phi_0\in \lang_0\\
{\bf OT} & \originbox (\Box_a\phi\rightarrow \phi)\\
{\bf O5} & \originbox (\lozenge_a\phi\rightarrow \Box_a\lozenge_a\phi)\\
{\bf OExch} & \originbox (\Box_a\phi \rightarrow \Box_b\phi)\\ 
{\bf OFull} & \originbox \lozenge_a\phi \quad \hfill \text{ where } \phi \text{ is of the form } \bigwedge_{p\in Q_1} p \wedge \bigwedge_{p\in Q_2}\neg p \text{ with } Q_1\cap Q_2=\emptyset\\
{\bf ODual} & \originbox \neg \phi \leftrightarrow \neg \originbox\phi\\
{\bf ODisj} & \originbox (\phi\vee\psi) \leftrightarrow (\originbox \phi \vee \originbox \psi)\\
&\\
{\bf OMP} & \text{from } \originbox (\phi\rightarrow \psi) \text{ and } \originbox \phi \text{ infer } \originbox \psi\\
{\bf ON} & \text{from } \originbox \phi \text{ infer } \originbox\Box_a\phi\\
\end{array} \]
and obtain {\bf OML} = {\bf ML} + {\bf ORI}, where axioms and rules of {\bf ML} are applied to $\lang^{\Box\circ}$.

Observe that necessitation for the origin modality, i.e.`from $\phi$ infer $\originbox \phi$', is derivable in \textbf{OML}.

\begin{center}
\begin{tabular}{lllclll}
1. & $\phi$ &Given&\quad&
2. & $\phi\leftrightarrow (p \lor \lnot p)$ & 1\\
3. & $\originbox p \lor \lnot \originbox p$ & \bf{Prop}&\quad&
4. & $\originbox p \lor \originbox \lnot p$ & \textbf{ODual} and 3\\
5. & $\originbox(p \lor \lnot p)$ & \textbf{ODisj} and 4&\quad&
6. & $\originbox \phi$ & \textbf{RE} and 5
\end{tabular}
\end{center}



\paragraph{Simulation Modal Logic}
We report two ways to axiomatize the logic with simulation quantifiers. First, one can use refinement quantifiers and the origin modality to ensure that all formulas in a cover set are satisfiable in a refinement of the mutual factual ignorance model (see axiom \textbf{SQ4} below). 
Then, let {\bf SIM} be 
\[\begin{array}{ll}
{\bf SQ1} & \smldia \phi_0 \leftrightarrow \phi_0 \hfill\text{ where }\phi_0\in \lang_0\\
{\bf SQ2} & \smldia (\phi\vee\psi)\leftrightarrow (\smldia \phi\vee\smldia\psi)\\
{\bf SQ3} & \smldia (\phi_0\wedge \phi) \leftrightarrow (\phi_0\wedge \smldia\phi)\hfill\text{ where }\phi_0\in \lang_0\\
{\bf SQ4} & \smldia\Et_{a\in A}\cover_a\Phi_a\iff \bigwedge_{a\in\agents} (\knows_a\bigvee_{\phi\in\Phi_a}\smldia\phi 
                   \land \originbox \bigwedge_{\phi\in\Phi_a}\suspects_a\rmldia\phi) \\
\end{array} \]
and {\bf ROSML} = {\bf ML} + {\bf REF} + {\bf ORI} + {\bf SIM}, where all of the axioms and rules are applied to $\lang$.

Another way to axiomatize simulation quantifiers is to explicitly require all the formulas in a cover set to be consistent. In this way, we do not need refinement quantifiers and the origin modality, but we have to pay for this with the higher complexity of determining whether or not an axiom may be faithfully applied (in this case, \textbf{PSPACE} for the satisfiability problem of modal logic K \cite{blackburnetal:2001}).
So, 
let {\bf SIM}$_{\mathsf{cons}}$ be {\bf SIM} where we replace the axiom {\bf SQ4} by
\[\begin{array}{ll}
{\bf SQ4}_{\mathsf{cons}} & \smldia\Et_{a\in A}\cover_a\Phi_a\iff \bigwedge_{a\in\agents}\knows_a\bigvee_{\phi\in\Phi_a}\smldia\phi \qquad \hfill \text{where all $\phi$ are consistent $\lang^\Box$ formulas}
\end{array} \]
then {\bf SML} = {\bf ML} + {\bf SIM}$_{\mathsf{cons}}$, where axioms and rules of {\bf ML} are applied to $\lang^{\Box\simulates}$. Axiom ${\bf SQ4}_{\mathsf{cons}}$ is inspired by axiom {\bf CCRKco2} in \cite{xingetal:2019}, that also has the consistency requirement. 


Note the similarity between \textbf{SML} and \textbf{RML}. 
All but axiom ${\bf SQ4}_{\mathsf{cons}}$ simply replaced $\rmldia$ modalities by $\smldia$ modalities. 
Intuitively the difference is clear: 
{\bf RQ4} says that if after refinement we can cover all the formulas in $\Phi_a$, then before the refinement all those must have been possible but it may no longer be a cover, so we lose the $\Box$. Dually, ${\bf SQ4}_{\mathsf{cons}}$ says that if after simulation we can cover all the formulas in $\Phi_a$, then before the simulation necessarily one of those must already have been reachable by a simulation. 
However, there is also a consistency requirement there, 
without which the axiom is unsound. Consider a single agent. We then get \[ \smldia(\bigwedge_{\phi \in \Phi}\Dia\phi\wedge \Box \bigvee_{\phi \in \Phi}\phi)\leftrightarrow \Box \bigvee_{\phi \in \Phi}\smldia \phi. \] The simplest counterexample to soundness is the set $\Phi = \{\top,\bot\}$, 
noting that $\Box\top \eq \top$. We get \[ \smldia(\Dia\top \et \Dia\bot)\leftrightarrow \Box (\smldia\top \vel \smldia\bot). \] Since the right-hand side of the equivalence requires that just one of the elements of $\Phi$ needs to be true at every reachable state, it suffices to observe that $\top$ is true everywhere, so that the disjunction is then also true. But then this also requires the left-hand side of the equivalence to be true: there is a simulation where $\Phi$ covers the successors, so there must be an accessible world where $\bot$ is true, which is an absurdity.

\subsection{Soundness}
In this section, we show that each of the axiom systems is sound for its logic. Soundness of {\bf ML} is well known, so we will not comment further on it here. Let us therefore consider the other axiom systems. 
\begin{restatable}{proposition}{propRMLsound}
\label{prop:RMLsoundness}
The axiomatization {\bf RML} is sound  for $\lang^{\Box\refines}$.
\end{restatable}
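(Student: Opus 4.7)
The plan is to verify, semantically, that each axiom in {\bf REF} is valid on all pointed Kripke models, and to observe that the rules and axioms of {\bf ML} remain sound when applied to the extended language $\lang^{\Box\refines}$ by routine compositionality (the truth value of $\rmlbox\phi$ at $M_s$ depends only on the truth values of $\phi$ at pointed refinements of $M_s$, so {\bf RE} in particular transfers). Thus, after recalling soundness of {\bf ML}, all that remains is the validity of {\bf RQ1}--{\bf RQ4}.

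For the first three axioms the arguments are short. {\bf RQ1} follows because refinements satisfy \textbf{atoms}, so any propositional $\phi_0 \in \lang_0$ has the same truth value at $M_s$ and at any refinement of it; combined with the reflexivity $M_s \refines M_s$ this gives the equivalence. {\bf RQ2} is the standard distribution of an existential modality over disjunction, applied to the refinement relation on pointed models. {\bf RQ3} combines the two observations: the forward direction pulls $\phi_0$ back along \textbf{atoms}; the backward direction uses that $\phi_0$ is preserved along any refinement witnessing $\rmldia\phi$.

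The main work is {\bf RQ4}. For the forward direction, assume $M_s \refines M'_{s'}$ with $M'_{s'} \models \bigwedge_a \cover_a \Phi_a$. Fix $a$ and $\phi \in \Phi_a$; since $\cover_a \Phi_a$ entails $\lozenge_a\phi$, pick an $a$-successor $t'$ of $s'$ with $M'_{t'} \models \phi$. By \textbf{back} applied to the refinement, there is an $a$-successor $t$ of $s$ with $M_t \refines M'_{t'}$, so $M_t \models \rmldia\phi$ and hence $M_s \models \lozenge_a\rmldia\phi$. For the backward direction, assume $M_s \models \bigwedge_a \bigwedge_{\phi\in\Phi_a} \lozenge_a\rmldia\phi$. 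For each pair $(a,\phi)$ with $\phi \in \Phi_a$, pick an $a$-successor $t^{a,\phi}$ of $s$ together with a pointed refinement $M^{a,\phi}_{u^{a,\phi}}$ of $M_{t^{a,\phi}}$ satisfying $\phi$. Construct a model $M'$ by taking a fresh point $s'$ with $V'(s') = V(s)$, attaching the disjoint union of the generated submodels of the $M^{a,\phi}$ at $u^{a,\phi}$, and letting the $a$-successors of $s'$ be exactly the points $\{u^{a,\phi} : \phi \in \Phi_a\}$. Then $M'_{s'} \models \cover_a \Phi_a$ by construction (each $\phi \in \Phi_a$ is satisfied at some $a$-successor of $s'$, and every $a$-successor of $s'$ satisfies the corresponding $\phi$). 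Taking the refinement relation to be $\{(s,s')\}$ together with the given refinements from each $M_{t^{a,\phi}}$ into $M^{a,\phi}$ (on their respective subdomains) gives $M_s \refines M'_{s'}$: \textbf{atoms} holds at $(s,s')$ by choice of $V'(s')$ and on the remaining pairs by assumption, and \textbf{back} at $(s,s')$ is witnessed by the $t^{a,\phi}$'s and elsewhere comes from the component refinements.

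The one delicate point is the construction for the backward direction of {\bf RQ4}: one must be careful that the component models $M^{a,\phi}$ are taken as disjoint copies (so that the combined relation is well-defined), that only the generated submodels from the $u^{a,\phi}$ are attached (so that $s'$ has no stray successors violating $\Box_a \Vel_{\phi\in\Phi_a}\phi$), and that the composite relation still satisfies \textbf{back} at $s'$ — which is exactly why we insisted on keeping all and only the chosen successors. Once these bookkeeping details are handled, the validity of {\bf RQ4} follows straightforwardly.
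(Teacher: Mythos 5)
Your proposal is correct and follows essentially the same route as the paper's proof: ML soundness is recalled, \textbf{RQ1}--\textbf{RQ3} are handled via \textbf{atoms} and the diamond-like nature of $\rmldia$, and the backward direction of \textbf{RQ4} is established by the same construction of a fresh root $s'$ with valuation $V(s)$ attached to tagged disjoint copies of the witnessing refined models, with the refinement relation being the union of the component refinements plus the pair $(s,s')$. The only cosmetic difference is that you attach generated submodels where the paper attaches the full component models (harmless, since unreachable states do not affect the cover or the \textbf{back} check).
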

\begin{proof}
We have ${\bf RML} = {\bf ML} + {\bf REF}$, and soundness of the axioms and rules of {\bf ML} is well known. What remains to show is that {\bf REF} is sound.

Refinements satisfy {\bf atoms}, and therefore do not change the truth value of propositional formulas. Hence $\models \rmldia\phi_0\leftrightarrow \phi_0 $, so {\bf RQ1} is sound.

Next, note that $\rmldia$ is a ``diamond-like'' operator, in that $\rmldia\phi$ holds iff there is at least one refinement after which $\phi$ is true. Because it is diamond-like, $\rmldia$ distributes over disjunctions, so $\models\rmldia (\phi \vee \psi)\leftrightarrow (\rmldia \phi \vee \rmldia \psi)$, which means that {\bf RQ2} is sound.

For {\bf RQ3}, we start with the left-to-right direction, so suppose that $M_s\models \rmldia(\phi_0\wedge \phi)$, where $\phi_0\in \lang_0$. Then there is some $M'_{s'}$ such that $M_s\refines M'_{s'}$ and $M'_{s'}\models \phi_0\wedge \phi$. As such, we have $M'_{s'}\models \phi_0$ and $M'_{s'}\models \phi$, which imply that $M_s\models \rmldia \phi_0$ and $M_s\models \rmldia \phi$, respectively. Furthermore, as shown above, $M_s\models \rmldia\phi_0\leftrightarrow \phi_0$. Hence $\models \rmldia(\phi_0\wedge\phi)\rightarrow (\phi_0\wedge\rmldia\phi)$.

Suppose then, for the right-to-left direction, that $M_s\models \phi_0\wedge \rmldia\phi$. Then $M_s\models \rmldia\phi$, so there is some $M'_{s'}$ such that $M_s\refines M'_{s'}$ and $M'_{s'}\models \phi$. Because every refinement satisfies {\bf atoms}, $M_s\refines M'_{s'}$ together with $M_s\models \phi_0$ implies that $M'_{s'}\models \phi_0$. So $M'_{s'}\models \phi_0\wedge \phi$, and therefore $M_s\models \rmldia (\phi_0\wedge \phi)$. So $\models ( \phi_0\wedge \rmldia\phi)\rightarrow \rmldia(\phi_0\wedge\phi)$. Together with the previously shown $\models \rmldia(\phi_0\wedge\phi)\rightarrow (\phi_0\wedge\rmldia\phi)$ this implies that $\models \rmldia(\phi_0\wedge\phi)\leftrightarrow (\phi_0\wedge\rmldia\phi)$, so {\bf RQ3} is sound.

For {\bf RQ4} we also start with the left-to-right direction, so suppose that $M_{s_1}\models \rmldia\Et_{a\in A}\cover_a\Phi_a$. Then there is some $M'_{s_1'}$ such that $M_{s_1}\refines M'_{s_1'}$ and $M'_{s_1'}\models \Et_{a\in A}\cover_a\Phi_a$. This implies that, in particular, for every $a\in A$ and $\phi\in \Phi_a$, there is some $s_2'$ such that $(s_1',s_2')\in R'_a$ and $M'_{s_2'}\models \phi$. Because $M_{s_1}\refines M'_{s_1'}$ and every refinement satisfies the {\bf back} condition, there must be some $s_2$ such that $(s_1,s_2)\in R_a$ and $M_{s_2}\refines M'_{s_2'}$. We then have $M_{s_2}\models \rmldia \phi$, and therefore $M_{s_1}\models \lozenge_a\smldia\phi$. This holds for every $a\in A$ and $\phi\in \Phi_a$, so $M_{s_1}\models \Et_{a\in A}\Et_{\phi\in \Phi_a}\lozenge_a\rmldia\phi$. We have now shown that $\models \rmldia\Et_{a\in A}\cover_a\Phi_a\rightarrow \Et_{a\in A}\Et_{\phi\in \Phi_a}\lozenge_a\rmldia\phi$.

For the right-to-left direction, suppose that $M_{s_0}\models \Et_{a\in A}\Et_{\phi\in \Phi_a}\lozenge_a\rmldia\phi$. So for every $a\in A$ and $\phi\in\Phi_a$ there is some $s_\phi$ such that $(s_0,s_\phi)\in R_a$ and $M_{s_\phi}\models\rmldia\phi$. So there are $M^\phi=(S^\phi,R^\phi,V^\phi)$ and $s'_{\phi}$ such that $M_{s_\phi}\refines M^\phi_{s_\phi'}$ and $M^\phi_{s_\phi'}\models\phi$. Now, let $M'$ be the disjoint union of all $M^\phi$, with one additional state $s_0'$, where we take $V'(s_0')=V(s_0)$, and additional $a$-edges from $s_0'$ to $s_\phi'$ for all $a\in A$ and $\phi\in\Phi_a$.
Formally, $M'=(S',R',V')$, where
\begin{itemize} 
\item $S' = \{s_0'\}\cup \bigcup_{a\in A}\bigcup_{\phi\in \Phi_a}\{(s',a,\phi)\mid s'\in S_\phi\}$,
\item $R'_a = \{(s_0',(s'_\phi,a,\phi))\mid a\in A, \phi\in \Phi_A\}\cup \bigcup_{a\in A}\bigcup_{\phi\in \Phi_a}\{((s_1',a,\phi), (s_2',a,\phi))\mid (s_1',s_2')\in R_a^\phi\}$,
\item $V'(s') = \left\{ \begin{array}{ll}V(s_0) & \text{if }s'=s_0'\\ V^\phi(t')&\text{if } s' = (t',a,\phi)\end{array}\right.$
\end{itemize}
See Figure~\ref{fig:M'-construction} for an example of this construction.

For every $\phi\in \Phi_a$, there is some $s'$ such that $(s_0',s')\in R'_a$ and $M'_{s'}\models\phi$, namely $s'=(s'_\phi,a,\phi)$. Conversely, for every $s'$ such that $(s_0',s')\in R'_a$ there is some $\phi\in \Phi_a$ such that $M'_{s'}\models \phi$, because $(s_0',s')\in R'_a$ implies that $s'$ is of the form $s'=(s_\phi',a,\phi)$. This implies that $M'_{s_0'}\models \Et_{a\in A}\cover_a\Phi_a$.

Furthermore, for every $a\in A$ and $\phi\in\Phi_a$ we have $M_{s_\phi}\refines M^\phi_{s_\phi'}\models\phi$. Let $Z_\phi$ be the witnessing refinement, and let $Z$ be the disjoint union of all $Z_\phi$ plus the pair $(s_0,s_0')$. Formally, $Z_\phi = (s_0,s_0')\cup \bigcup_{a\in A}\bigcup_{\phi\in \Phi_a}\{(s,(s',a,\phi))\mid (s,s')\in Z_\phi\}$. We will show that this $Z$ is a refinement.

Take any $(s,s')\in Z$. We consider two cases. Firstly, if $s'=s_0'$, then $s=s_0$. We chose $V'(s_0')=V(s_0)$, so in this case {\bf atoms} is satisfied. Secondly, if $s'\not = s_0'$ then $s'$ is of the form $s'=(t',a,\phi)$, and we have $(s,t')\in Z^\phi$, which implies that $V(s)=V^\phi(t')$ because $Z^\phi$ satisfies {\bf atoms}. We took $V'(s')=V^{\phi}(t')$, so from $V(s)=V^\phi(t')$ it follows that $V(s)=V'(s')$, so {\bf atoms} holds for $s$ and $s'$. These two cases are exhaustive, so {\bf atoms} is satisfied for $Z$.

Then take any $(s_1,s_1')\in Z$ and $(s_1',s_2')\in R'_a$. Again, we consider two cases. Firstly, if $s_1'=s_0'$, then $s_1=s_0$ and $s_2'$ is of the form $s_2'=(s_\phi',a,\phi)$. We have $(s_\phi,s'_\phi)\in Z^\phi$ which, by the construction of $Z$, implies that $(s_2,s_2')=(s_\phi,(s_\phi',a,\phi))\in Z$. Furthermore, $(s_1,s_2)=(s_0,s_\phi)\in R_a$, so {\bf back} is satisfied in this case. Secondly, if $s_1'\not = s_0'$, then $s_1'$ and $s_2'$ are of the form $s_1'=(t_1',b,\phi)$ and $s_2'=(t_2',b,\phi)$, for some $b$ and $\phi$. We then also have, by the construction of $Z$ and $M'$, that $(t_1',t_2')\in R^\phi$ and $(s_1,t_1')\in Z^\phi$. The {\bf back} condition for $Z^\phi$ therefore implies that there is some $s_2$ such that $(s_1,s_2)\in R_a$ and $(s_2,t_2')\in Z^\phi$. This implies that $(s_1,s_2)\in R_a$ and $(s_2,s_2')=(s_2,(t_2',b,\phi))\in Z$, so {\bf back} is satisfied in this case as well. The two cases are exhaustive, so {\bf back} is satisfied for $Z$.

$Z$ is therefore a refinement, which implies that $M_{s_0}\refines M'_{s_0'}$. Together with the previous conclusion that $M'_{s_0'}\models \Et_{a\in A}\cover_a\Phi_a$, this implies that $M_{s_0}\models \rmldia\Et_{a\in A}\cover_a\Phi_a$. We have now shown that \[\models \Et_{a\in A}\Et_{\phi\in \Phi_a}\lozenge_a\rmldia\phi \rightarrow \rmldia\Et_{a\in A}\cover_a\Phi_a,\] which together with the previously shown \[\models \rmldia\Et_{a\in A}\cover_a\Phi_a\rightarrow \Et_{a\in A}\Et_{\phi\in \Phi_a}\lozenge_a\rmldia\phi\] implies that  \[\models \rmldia\Et_{a\in A}\cover_a\Phi_a\leftrightarrow \Et_{a\in A}\Et_{\phi\in \Phi_a}\lozenge_a\rmldia\phi.\] So {\bf RQ4} is sound.
\end{proof}
\begin{figure}
\begin{center}
\begin{tikzpicture}[scale=0.9, transform shape]
\draw[rounded corners=1cm] (-1.5,1) -- (3.5,1) -- (3.5,-3.2) -- (-1.5,-3.2) -- cycle;
\node at (1,-3.5) {$M$};
\node[label=above:$s_0$] (s) at (1,0) {$\bullet$};

\node[label=below:$s_{\phi_1}$] (sphi1) at (-0.5,-2.2) {$\bullet$};
\node[label=below:$s_{\phi_2}$] (sphi2) at (0.5,-2.2) {$\bullet$};
\node[label=below:$s_{\phi_3}$] (spsi1) at (1.5,-2.2) {$\bullet$};
\node[label=below:$s_{\phi_4}$] (spsi2) at (2.5,-2.2) {$\bullet$};

\draw[->] (s) -- (sphi1) node[midway,left] {$a$};
\draw[->] (s) -- (sphi2) node[midway,left] {$a$};
\draw[->] (s) -- (spsi1) node[midway,right] {$b$};
\draw[->] (s) -- (spsi2) node[midway,right] {$b$};

\draw[dashed] (4,1.2) -- (4,-5);

\draw[rounded corners=1cm] (4.8,1) -- (13.2,1) -- (13.2,-4.4) -- (4.8,-4.4) -- cycle;
\node at (9,-4.7) {$M'$};

\node[label=above:$s_0'$] (s0') at (9,0) {$\bullet$};

\node[label=below:$s_{\phi_1}'$] (sphi1') at (6,-2.2) {$\bullet$};
\node[draw,ellipse,minimum width=1.5cm,minimum height=2.5cm,label=below:$M^{\phi_1}$] (M1) at (6,-2.5){};
\node[label={below=2pt}:$s_{\phi_2}'$] (sphi2') at (8,-2.2) {$\bullet$};
\node[draw,ellipse,minimum width=1.5cm,minimum height=2.5cm,label=below:$M^{\phi_2}$] (M2) at (8,-2.5){};
\node[label=below:$s_{\phi_3}'$] (spsi1') at (10,-2.2) {$\bullet$};
\node[draw,ellipse,minimum width=1.5cm,minimum height=2.5cm,label=below:$M^{\phi_3}$] (M3) at (10,-2.5){};
\node[label=below:$s_{\phi_4}'$] (spsi2') at (12,-2.2) {$\bullet$};
\node[draw,ellipse,minimum width=1.5cm,minimum height=2.5cm,label=below:$M^{\phi_4}$] (M4) at (12,-2.5){};

\draw[->] (s0') -- (sphi1') node[midway,left] {$a$};
\draw[->] (s0') -- (sphi2') node[midway,right] {$a$};
\draw[->] (s0') -- (spsi1') node[midway,left] {$b$};
\draw[->] (s0') -- (spsi2') node[midway,right] {$b$};

\draw[dotted] (s) edge[bend left] (s0');
\draw[dotted] (sphi1) edge[bend right] (sphi1');
\draw[dotted] (sphi2) edge[bend right] (sphi2');
\draw[dotted] (spsi1) edge[bend right] (spsi1');
\draw[dotted] (spsi2) edge[bend right] (spsi2');
\end{tikzpicture}
\end{center}

\caption{Schematic drawing of an example of $M'$ as in the proof of Proposition~\ref{prop:RMLsoundness}. In this example, $A=\{a,b\}$, $\Phi_a=\{\phi_1,\phi_2\}$ and $\Phi_b=\{\phi_3,\phi_4\}$. Dotted lines represent the relation $Z$. The state $s_0'$ has exactly the successors $s_{\phi_1}', \cdots, s_{\phi_4}'$, the state $s_0$ has at least the successors $s_{\phi_1},\cdots, s_{\phi_4}$ but may have more. Hence $Z$ is a refinement but not, in general, a bisimulation.}
\label{fig:M'-construction}
\end{figure}
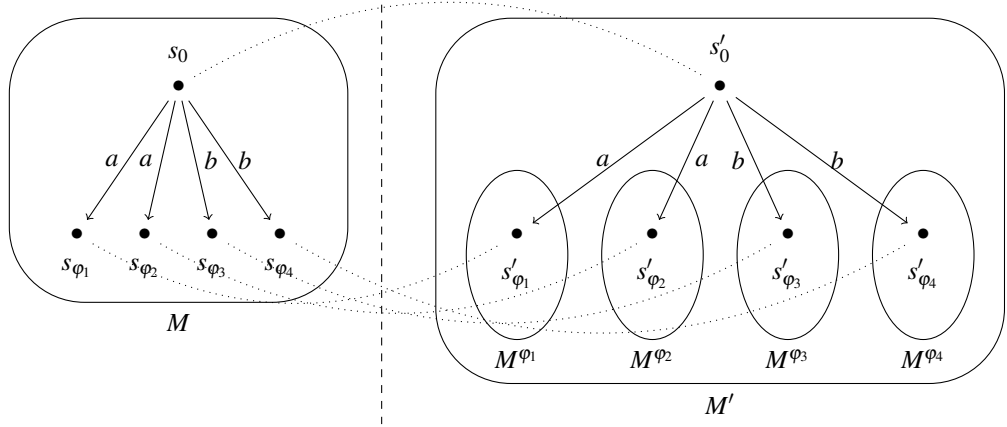

The soundness of \textbf{RQ4} is shown rather differently than that of the similar axiom \textbf{A4} in \cite{hvdetal.aus:2020}, involving quantifiers over arrow updates. However, the proof in the current setting has nice duality with that of the soundness of \textbf{SQ4}, coming up (almost) next, in Proposition~\ref{prop:SMLsoundness}.

\begin{restatable}{proposition}{propOMLsound}
\label{prop:OMLsoundness}
The axiomatization {\bf OML} is sound for $\lang^{\Box\circ}$.
\end{restatable}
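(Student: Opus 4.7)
My approach would be to verify soundness axiom-by-axiom, exploiting two structural features of the mutual factual ignorance model $M^\circ = (S^\circ, R^\circ, V^\circ)$: (i) each $R^\circ_a$ is the universal relation $S^\circ \times S^\circ$, and so is an equivalence relation that moreover coincides for all agents; and (ii) $V^\circ$ is the identity on $S^\circ = \mathcal{P}(P)$, so every $P' \subseteq P$ is realised as the valuation of exactly one state. Recall that $M_s \models \originbox \phi$ iff $M^\circ_{V(s)} \models \phi$, so each axiom of \textbf{ORI} reduces to a claim about truth in $M^\circ$.

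The axioms \textbf{O1}, \textbf{ODual}, and \textbf{ODisj} are essentially book-keeping. From (ii) we have $V^\circ(V(s)) = V(s)$, so atoms have the same truth value at $M_s$ and at $M^\circ_{V(s)}$; a routine induction on propositional connectives gives \textbf{O1}, while \textbf{ODual} and \textbf{ODisj} follow because Boolean connectives commute with the shift of evaluation point from $M_s$ to $M^\circ_{V(s)}$. The axioms \textbf{OT}, \textbf{O5}, and \textbf{OExch} express standard frame conditions enjoyed by $M^\circ$: $R^\circ_a$ is reflexive (T), universal hence Euclidean (5), and all $R^\circ_a$ coincide (giving $\Box_a \phi \rightarrow \Box_b \phi$). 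Each therefore holds at every point of $M^\circ$, which is precisely what prefixing with $\originbox$ requires.

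For \textbf{OFull}, given $\phi = \bigwedge_{p \in Q_1} p \wedge \bigwedge_{p \in Q_2} \neg p$ with $Q_1 \cap Q_2 = \emptyset$, the state $Q_1 \in S^\circ$ satisfies $\phi$ in $M^\circ$ by (ii); since $R^\circ_a$ is universal, $Q_1$ is $a$-accessible from every $V(s)$, so $\lozenge_a \phi$ holds throughout $M^\circ$ and $\originbox \lozenge_a \phi$ is valid. Rule \textbf{OMP} is direct: if the two premises are valid, then for any $M_s$ the state $M^\circ_{V(s)}$ satisfies both $\phi \rightarrow \psi$ and $\phi$, hence $\psi$, giving $M_s \models \originbox \psi$.

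The only step demanding slightly more care is rule \textbf{ON}. To show that validity of $\originbox \phi$ entails validity of $\originbox \Box_a \phi$, one must verify that $M^\circ_{V(s)} \models \Box_a \phi$, i.e., $M^\circ_t \models \phi$ for every $a$-successor $t$ of $V(s)$; since $R^\circ_a$ is universal, $t$ ranges over all of $\mathcal{P}(P)$. The crucial observation is that every $P' \subseteq P$ arises as $V'(s')$ for some pointed model $M'_{s'}$ (take the one-state model with valuation $P'$), so validity of $\originbox \phi$ delivers $M^\circ_{P'} \models \phi$ for every such $P'$, and in particular for every $a$-successor $t$ of $V(s)$. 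This quantificational switch --- from ``$M^\circ_{V(s)} \models \phi$ for every pointed $M_s$'' to ``$M^\circ_{P'} \models \phi$ for every $P' \subseteq P$'' --- is the main thing to get right, after which the soundness of \textbf{ON} is immediate.
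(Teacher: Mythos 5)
Your proposal is correct and follows essentially the same route as the paper's proof: an axiom-by-axiom verification reducing each \textbf{ORI} axiom and rule to the structural facts that $R^\circ_a$ is universal (hence an S5 relation shared by all agents) and that $V^\circ$ realises every subset of $P$. Your explicit justification of the quantificational switch in \textbf{ON} --- that every $P'\subseteq P$ is the valuation of some pointed model, so validity of $\originbox\phi$ yields $M^\circ_{P'}\models\phi$ for all $P'$ --- is in fact slightly more careful than the paper, which passes over this step with ``and therefore also every $V(s)$''.
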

\begin{proof}
We have ${\bf OML} = {\bf ML} + {\bf ORI}$, and the soundness of {\bf ML} is well known, what remains to show is that {\bf ORI} is sound.

To begin with, for every $M_s$ and every $\phi_0\in \lang_0$, we have $M_s\models \phi_0\Leftrightarrow M^\circ_{V(s)}\models \phi_0 \Leftrightarrow M_s\models \originbox\phi_0$, so $\models \originbox\phi_0\leftrightarrow \phi_0$, which means that {\bf O1} is sound.

The model $M^\circ$ is an S5-model, i.e., all relations in $M^\circ$ are equivalence relations. It follows that, for every $t\in S^\circ$, $M^\circ_{t}\models \square_a\phi\rightarrow \phi$ and $M^\circ_{t}\models \lozenge_a\phi\rightarrow \square_a\lozenge_a\phi$. This implies that, for all $M_s$, we have $M^\circ_{V(s)}\models \square_a\phi\rightarrow \phi$ and $M^\circ_{V(s)}\models \lozenge_a\phi\rightarrow \square_a\lozenge_a\phi$, and hence $M_s\models \originbox (\square_a\phi\rightarrow \phi)$ and $M_s\models \originbox(\lozenge_a\phi\rightarrow \square_a\lozenge_a\phi)$. As this is true for every $M_s$, we have $\models \originbox(\square_a\phi\rightarrow \phi)$ and $\models \originbox(\lozenge_a\phi\rightarrow \square_a\lozenge_a\phi)$, so {\bf OT} and {\bf O5} are sound.

Furthermore, all agents $a,b\in A$ have the same accessibility relation on $M^\circ$, namely $R^\circ_a = R^\circ_b = S^\circ\times S^\circ$. This implies that $M^\circ_{t}\models \square_a\phi\rightarrow\square_b\phi$, for every $t$, and therefore $\models \originbox (\square_a\phi\rightarrow\square_b\phi)$. So {\bf OExch} is sound.

Next, consider $Q_1,Q_2\subseteq P$ such that $Q_1\cap Q_2=\emptyset$. Then there is at least one valuation $t$ such that $M^\circ_{t}\models \Et_{p\in Q_1}p \wedge \Et_{p\in Q_2}\neg p$. Because $R^\circ_a = S^\circ\times S^\circ$, it follows that $M^\circ_{t'}\models \lozenge_a(\Et_{p\in Q_1}p \wedge \Et_{p\in Q_2}\neg p)$ for all $t'$, and therefore also that $\models \originbox\lozenge_a (\Et_{p\in Q_1}p \wedge \Et_{p\in Q_2}\neg p)$. So {\bf OFull} is sound.

We have $M_s\models \originbox \neg \phi \Leftrightarrow M^\circ_{V(s)}\models \neg \phi \Leftrightarrow M^\circ_{V(s)}\not\models \phi\Leftrightarrow M_s\not\models \originbox\phi\Leftrightarrow M_s\models\neg\originbox \phi$. It follows that $\models \originbox\neg \phi \leftrightarrow \neg \originbox\phi$, so {\bf ODual} is sound.

Similarly, $M_s\models \originbox (\phi\vee\psi) \Leftrightarrow M^\circ_{V(s)}\models \phi\vee\psi \Leftrightarrow (M^\circ_{V(s)}\models \phi \text{ or } M^\circ_{V(s)}\models \psi )\Leftrightarrow (M_s\models \originbox \phi \text{ or } M_s\models\originbox \psi )\Leftrightarrow M_s\models \originbox \phi\vee\originbox\psi$. It follows that $\models \originbox(\phi\vee\psi)\leftrightarrow (\originbox\phi\vee\originbox\psi)$, so {\bf ODisj} is sound.

This leaves the two rules, we show that they preserve validity. So suppose that $\models \originbox(\phi\rightarrow \psi)$ and $\models \originbox\psi$. Then for every $M_s$, we have $M_s\models \originbox (\phi\rightarrow \psi)$ and $M_s\models \originbox\psi$. So $M^\circ_{V(s)}\models \phi\rightarrow \psi$ and $M^\circ_{V(s)}\models \phi$, which implies that $M^\circ_{V(s)}\models \psi$ and therefore $M_s\models \originbox\psi$. Since this holds for every $M_s$, we then have $\models \originbox\psi$, proving the soundness of {\bf OMP}.

Finally, suppose that $\models \originbox \phi$. Then for every $M_s$, we have $M_s\models \originbox\phi$ and therefore $M^\circ_{V(s)}\models \phi$. Since this holds for every $M_s$, and therefore also every $V(s)$, we also have $M^\circ_{V(s')}\models \phi$ for every $V(s')$, which implies that $M^\circ_{V(s)}\models \square_a\phi$, and therefore $M_s\models \originbox \square_a\phi$. Since this holds for every $M_s$, we then have $\models \originbox \square_a\phi$. So {\bf ON} is sound.
\end{proof}

\begin{restatable}{proposition}{propSMLsound}
\label{prop:SMLsoundness}
The axiomatization {\bf SML} is sound  for $\lang^{\Box\simulates}$.
\end{restatable}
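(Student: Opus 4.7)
The plan is to verify each axiom of $\mathbf{SIM}_{\mathsf{cons}}$ individually, since soundness of $\mathbf{ML}$ is standard. The arguments for $\mathbf{SQ1}$, $\mathbf{SQ2}$ and $\mathbf{SQ3}$ are direct analogues of the soundness proofs of $\mathbf{RQ1}$--$\mathbf{RQ3}$ in Proposition~\ref{prop:RMLsoundness}: they use only that simulations satisfy \textbf{atoms} (preserving propositional truth, for $\mathbf{SQ1}$ and $\mathbf{SQ3}$) together with the fact that $\smldia$ is an existential modality over the simulation relation (so it distributes over disjunction for $\mathbf{SQ2}$, and commutes with conjunction by a propositional formula for $\mathbf{SQ3}$).

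All the substantive work lies in $\mathbf{SQ4}_{\mathsf{cons}}$. For the left-to-right direction, the plan is dual to the left-to-right direction of $\mathbf{RQ4}$: if $M_{s_0} \models \smldia \Et_{a \in A} \cover_a \Phi_a$ is witnessed by $M_{s_0} \simulates M'_{s_0'}$, then for any $a \in A$ and any $R_a$-successor $t$ of $s_0$, \textbf{forth} on the simulation produces an $R'_a$-successor $t'$ of $s_0'$ with $M_t \simulates M'_{t'}$; combined with $M'_{s_0'} \models \Box_a \Vel_{\phi \in \Phi_a} \phi$ this yields some $\phi \in \Phi_a$ with $M_t \models \smldia \phi$, whence $M_{s_0} \models \knows_a \Vel_{\phi \in \Phi_a} \smldia \phi$. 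Note that only the ``box'' conjunct of the cover is used here; the ``diamond'' conjuncts play no role in this direction.

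The right-to-left direction is the hard one, and it is exactly where the consistency hypothesis becomes indispensable. Assuming $M_{s_0} \models \Et_{a \in A} \knows_a \Vel_{\phi \in \Phi_a} \smldia \phi$, the plan is to construct a pointed model $M'_{s_0'}$ by hanging two families of disjoint gadgets off a fresh root $s_0'$ with $V'(s_0') := V(s_0)$, in a manner dual to the construction of $M'$ in Proposition~\ref{prop:RMLsoundness}. First, for each $a \in A$ and each $R_a$-successor $t$ of $s_0$, pick some $\phi^{a,t} \in \Phi_a$ with $M_t \models \smldia \phi^{a,t}$ together with a simulation witness $M_t \simulates N^{a,t}_{r^{a,t}} \models \phi^{a,t}$, and attach $N^{a,t}$ disjointly via a new $R'_a$-edge from $s_0'$ to $r^{a,t}$; this family discharges \textbf{forth} at the pair $(s_0, s_0')$. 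Second, for every $\phi \in \Phi_a$, invoke the consistency of $\phi$ to pick a pointed model $N^{a,\phi}_{r^{a,\phi}} \models \phi$ and attach it analogously; this family provides the diamond witnesses for each $\phi \in \Phi_a$ in $\cover_a \Phi_a$. By construction every $R'_a$-successor of $s_0'$ lies in one such gadget and therefore satisfies some $\phi \in \Phi_a$, giving the ``box'' conjunct, so $M'_{s_0'} \models \Et_{a \in A} \cover_a \Phi_a$; and the disjoint union of the gadget simulations together with $(s_0, s_0')$ is readily checked to satisfy \textbf{atoms} and \textbf{forth}, witnessing $M_{s_0} \simulates M'_{s_0'}$.

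The main obstacle is precisely the second gadget family in the right-to-left direction: without consistency one cannot produce a pointed witness for some $\phi \in \Phi_a$, and the axiom then genuinely fails, as illustrated by the $\Phi = \{\top, \bot\}$ counterexample already recalled in the excerpt. Beyond this conceptual point, the proof is a careful but routine translation of the refinement case, with the roles of \textbf{forth} and \textbf{back} swapped and the consistency-driven witness gadgets inserted.
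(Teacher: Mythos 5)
Your proposal is correct and follows essentially the same route as the paper's proof: the same left-to-right argument via \textbf{forth} and the box conjunct of the cover, and the same right-to-left construction hanging simulation witnesses for each successor plus consistency witnesses off a fresh root. The only cosmetic difference is that you attach a consistency gadget for every $\phi\in\Phi_a$, whereas the paper attaches them only for the formulas in $\Phi_a^-=\Phi_a\setminus\Phi_a^+$ not already covered by a successor gadget; both variants work.
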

\begin{proof}

We have ${\bf SML} = {\bf ML} + {\bf SIM_{cons}}$. Soundness of {\bf ML} is well known, what remains to be shown is the soundness of ${\bf SIM_{cons}}$. Axioms {\bf SQ1}, {\bf SQ2}, and {\bf SQ3} are sound for the same reasons that {\bf RQ1}, {\bf RQ2} and {\bf RQ3} are: preservation of propositional logic under simulations for {\bf SQ1} and {\bf SQ3}, and $\smldia$ being ``diamond-like'' for {\bf SQ2}. This leaves ${\bf SQ4_{cons}}$.

For the left-to-right direction, suppose that $M_s\models \smldia\Et_{a\in A}\cover_a\Phi_a$.\footnote{Note that we are not, at this point, explicitly assuming that all $\phi\in \Phi_a$ are consistent. This is because $M_s\models \smldia\Et_{a\in A}\cover_a\Phi_a$ implies that all $\phi$ are consistent. The separate consistency condition in ${\bf SQ4_{cons}}$ is only required for the right-to-left direction.} Then there is some $M'_{s'}$ such that $M_s\simulates M'_{s'}$ and $M'_{s'}\models \Et_{a\in A}\cover_a\Phi_a$. Take any $t$ such that $(s,t)\in R_a$. Because $M_s\simulates M'_{s'}$, there is some $t'$ such that $(s',t')\in R'_a$ and $M_{t}\simulates M'_{t'}$.

Now, because $M'_{s'}\models \Et_{a\in A}\cover_a\Phi_a$, there is some $\phi\in \Phi_a$ such that $M'_{t'}\models\phi$, and therefore $M_{t}\models \smldia\phi$. This holds for every $(s,t)\in R_a$ and every $a\in A$, so we have $M_s\models\Et_{a\in A}\square_a\bigvee_{\phi\in \Phi_a}\smldia\phi$. We have now shown that
\[\models \smldia \Et_{a\in A}\cover_a\Phi_a \rightarrow \Et_{a\in A}\square_a\bigvee_{\phi\in \Phi_a}\smldia \phi.\]

For right-to-left direction, suppose that $M_{s_0}\models \Et_{a\in A}\square_a\bigvee_{\phi\in \Phi_a}\smldia\phi$, where all $\phi$ are consistent. Take any $(s,t)\in R_a$. Because $M_s\models \square_a\bigvee_{\phi\in \Phi_a}\smldia\phi$, there is some $\phi\in \Phi_a$ such that $M_{t}\models \smldia\phi$. Then there are some model $M^{t}$ and state $t'$ of $M^{t}$ such that $M_{t}\simulates M^{t}_{t'}$ and $M^{t}_{t'}\models \phi$.\footnote{Note that in $M^t_{t'}$ the superscript $t$ refers to a state of $M$, but here serves only to identify a model $M^t$. The subscript $t'$, meanwhile, identifies a particular state of $M^t$, namely the one such that $M_{t}\simulates M^{t}_{t'}$.}

Every $M^{t}_{t'}$ satisfies some $\phi\in \Phi_a$, but there is no guarantee that every $\phi\in \Phi_a$ is satisfied in such a model. Let $\Phi_a^+\subseteq \Phi_a$ be the set of formulas that are so satisfied, and $\Phi_a^- = \Phi_a\setminus \Phi_a^+$. Because all $\phi\in \Phi_a$ are, by assumption, consistent, there exist models $M^\phi$ and $t_\phi$ such that $M^\phi_{t_\phi}\models \phi$.

Now, let $M'$ be the disjoint union of (i) the $M^t_{t'}$ for each $a\in A$ and $(s,t)\in R_a$ and (ii) the $M^{\phi}_{t_\phi'}$ for each $\phi\in \Phi_a^-$. Add a single state $s_0'$, with $V'(s_0')=V(s)$, and $a$-edges from $s_0'$ to every $M^t_{t'}$ where $(s,t)\in R_a$ and every $M^\phi_{t_\phi'}$ where $\phi\in \Phi_a^-$. Call the resulting model $M'=(S',R',V')$. Formally,
\begin{itemize}
	\item $S' = \{s_0'\} \cup \bigcup_{(s,t)\in R_a}\{(s',a,t)\mid s'\in S^t\}\cup \bigcup_{\phi\in \Phi_a^-}\{(s',a,\phi)\mid s'\in S^\phi\}$
	\item $R'_a = \{(s_0',(t',a,t))\mid (s,t)\in R_a\}\cup \{(s_0',(s_\phi',a,\phi))\mid \phi\in \Phi_a^-\}\cup$ 
	\\
	\hspace*{50pt}
	$\bigcup_{a\in A}\bigcup_{(s,t)\in R_a}\{((s_1',a,t),(s_2',a,t))\mid (s_1',s_2')\in R_a^t\}\cup$ 
	\\
	\hspace*{50pt}
	$\bigcup_{a\in A}\bigcup_{\phi\in\Phi_a^-}\{((s_1',a,\phi),(s_2',a,\phi))\mid (s_1',s_2')\in R_a^\phi\}$
	\item $V'(s') = \left\{ \begin{array}{ll}V(s_0) & \text{if }s'=s_0'\\ V^t(t')&\text{if } s' = (t',a,t)\\V^\phi(t')&\text{if } s' = (t',a,\phi)\end{array}\right.$
\end{itemize}
See Figure~\ref{fig:M'-construction2} for an example of this construction.

For every $(s_0',s')\in R_a'$, we have $s'=(t',a,t)$ or $s'=(s_\phi',a,\phi)$. In the first case, there is some $\phi\in \Phi_a^+$ such that $M'_{s'}\models \phi$, since $M'_{s'}$ is a copy of $M^t_{t'}$. In the second case, there is some $\phi\in \Phi_a^-$ such that $M'_{s'}\models\phi$, since $M'_{s'}$ is a copy of $M^\phi_{s_\phi'}$. So every $a$-successor of $s_0'$ satisfies some $\phi\in \Phi_a$.

Furthermore, for every $\phi\in \Phi_a^+$ there is some $t$ such that $(s_0,t)\in R_a$ and $M^t_{t'}\models \phi$, which implies that $M'_{(t',a,t)}\models \phi$ and therefore that $M'_{s_0'}\models \lozenge_a\phi$. For very $\phi\in \Phi_a^-$, we have $M^\phi_{s_\phi'}\models\phi$, which implies that $M'_{(s_\phi',a,\phi)}\models \phi$ and therefore that $M'_{s_0'}\models \lozenge_a\phi$. Taken together, this shows that every $\phi\in \Phi_a$ is satisfied in some $a$-successor of $s_0'$. We had already shown that every $a$-successor of $s_0'$ satisfies some $\phi\in \Phi_a$, so we have $M'_{s_0'}\models \Et_{a\in A}\cover_a\Phi_a$.

Now, we will show that $M_{s_0}\simulates M'_{s_0'}$. For each $M^t_{t'}$, we have $M_t\simulates M^t_{t'}$, let us call the witnessing simulation $Z^t$. Now, let $Z$ be the disjoint union of all $Z^t$ together with $(s_0,s_0')$. Formally, 
\[Z = \{(s,s_0')\}\cup \bigcup_{a\in A}\bigcup_{(s_0,t)\in R_a} \{(s,(s',a,t))\mid (s,s')\in Z^t\}.\] 
We claim that $Z$ is a simulation.

Take any $(x,x')\in Z$. We consider two cases. Firstly, suppose that $x'=s_0'$. Then $x=s_0$. We have $V(s_0)=V'(s_0')$ by construction, so in this case {\bf atoms} is satisfied. Secondly, suppose $x'\not = s_0'$. Then $x'$ must be of the form $(s',a,t)$, since $Z$ does not relate to any states of the form $(s',a,\phi)$. By the construction of $Z$, we then have $(x,s')\in Z$, which implies that $V(x)=V^t(s')$, and therefore also $V(x)=V'(x')$. So in this case {\bf atoms} is also satisfied. These two cases are exhaustive, so {\bf atoms} is satisfied for $Z$.

Then take any $(x_1,x_1')\in Z$ and $(x_1,x_2)\in R_a$. Again, we consider two cases. Firstly, suppose $x_1'=s_0'$. Then $x_1=s_0$, and $x_2=t$ for some $t$ such that $(s,t)\in R_a$. Then there is some $M^t_{t'}$ such that $(t,t')\in Z^t$. By the construction of $Z$ we have $(t,(t',a,t))\in Z$, and by the construction of $M'$ we have $(s_0',(t',a,t))\in R_a$. So {\bf forth} is satisfied in this case. Secondly, suppose $x_1'\not = s_0'$. Then $x_1'$ is of the form $x_1'=(s_1',b,t)$, where $(x_1,s_1')\in Z^t$. Because $Z^t$ is a simulation, there is some $s_2'$ such that $(s_1',s_2')\in R_a^t$ and $(x_2,s_2')\in Z^t$. We then also have $(s_1,(s_2',b,t))\in R_a'$ and $(x_2,(s_2',b,t))\in Z$, so {\bf forth} is satisfied in this case. The two cases are exhaustive, so {\bf forth} is satisfied for $Z$.

Because $Z$ satisfies {\bf atoms} and {\bf forth}, it is a simulation, so $M_s\simulates M'_{s_0'}$. Together with the previously shown $M'_{s_0'}\models \Et_{a\in A}\cover_a\Phi_a$, this implies that $M_{s_0}\models \smldia\Et_{a\in A}\cover_a\Phi_a$. This completes the proof that 
\[\models \Et_{a\in A}\square_a\bigvee_{\phi\in \Phi_a}\smldia\phi\rightarrow \Et_{a\in A}\cover_a\Phi_a,\] 
if all $\phi\in \Phi_a$ are consistent. We had already shown the other direction, so 
\[\models \smldia \Et_{a\in A}\cover_a\Phi_a \rightarrow \Et_{a\in A}\square_a\bigvee_{\phi\in \Phi_a}\smldia \phi,\]
when all $\phi$ are consistent. This shows that ${\bf SQ4_{cons}}$ is sound, which completes the soundness proof for {\bf SML}.
\end{proof}
\begin{figure}
\begin{center}
\begin{tikzpicture}[scale=0.9, transform shape]
\draw[rounded corners=1cm] (-1.5,1) -- (3.5,1) -- (3.5,-3.8) -- (-1.5,-3.8) -- cycle;
\node at (1,-4.1) {$M$};
\node[label=above:$s_0$] (s) at (1,0) {$\bullet$};

\node[label=below:$\begin{array}{c}t_1\\ \smldia\phi_1\end{array}$] (t1) at (-0.5,-2.2) {$\bullet$};
\node[label=below:$\begin{array}{c}t_2\\ \smldia\phi_1\end{array}$] (t2) at (1,-2.2) {$\bullet$};
\node[label=below:$\begin{array}{c}t_3\\ \smldia\phi_3\end{array}$] (t3) at (2.5,-2.2) {$\bullet$};

\draw[->] (s) -- (t1) node[midway,left] {$a$};
\draw[->] (s) -- (t2) node[midway,left] {$a$};
\draw[->] (s) -- (t3) node[midway,right] {$b$};

\draw[dashed] (4.1,1.2) -- (4.1,-5);

\draw[rounded corners=1cm] (4.8,1) -- (15.2,1) -- (15.2,-4.4) -- (4.8,-4.4) -- cycle;
\node at (10,-4.7) {$M'$};

\node[label=above:$s_0'$] (s0') at (10,0) {$\bullet$};

\node[label=below:$\begin{array}{c}t_1'\\\phi_1\end{array}$] (t1') at (6,-2.2) {$\bullet$};
\node[draw,ellipse,minimum width=1.5cm,minimum height=2.5cm,label=below:$M^{t_1}$] (M1) at (6,-2.5){};
\node[label=below:$\begin{array}{c}t_2'\\\phi_1\end{array}$] (t2') at (8,-2.2) {$\bullet$};
\node[draw,ellipse,minimum width=1.5cm,minimum height=2.5cm,label=below:$M^{t_2}$] (M2) at (8,-2.5){};
\node[label=below:$\begin{array}{c}s_{\phi_2}'\\\phi_2\end{array}$] (sphi2') at (10,-2.2) {$\bullet$};
\node[draw,ellipse,minimum width=1.5cm,minimum height=2.5cm,label=below:$M^{\phi_2}$] (M3) at (10,-2.5){};
\node[label=below:$\begin{array}{c}t_3'\\\phi_3\end{array}$] (t3') at (12,-2.2) {$\bullet$};
\node[draw,ellipse,minimum width=1.5cm,minimum height=2.5cm,label=below:$M^{t_3}$] (M4) at (12,-2.5){};
\node[label=below:$\begin{array}{c}s_{\phi_4}'\\\phi_4\end{array}$] (sphi4') at (14,-2.2) {$\bullet$};
\node[draw,ellipse,minimum width=1.5cm,minimum height=2.5cm,label=below:$M^{\phi_4}$] (M4) at (14,-2.5){};

\draw[->] (s0') -- (t1') node[midway,left] {$a$};
\draw[->] (s0') -- (t2') node[midway,left] {$a$};
\draw[->] (s0') -- (sphi2') node[midway,left] {$a$};
\draw[->] (s0') -- (t3') node[midway,right] {$b$};
\draw[->] (s0') -- (sphi4') node[midway,right] {$b$};

\draw[dotted] (s) edge[bend left] (s0');
\draw[dotted] (t1) edge[bend right] (t1');
\draw[dotted] (t2) edge[bend right] (t2');
\draw[dotted] (t3) edge[bend right] (t3');
\end{tikzpicture}
\end{center}

\caption{Schematic drawing of an example of $M'$ as in the proof of Proposition~\ref{prop:SMLsoundness}. Here, we take $A=\{a,b\}$, $\Phi_a=\{\phi_1,\phi_2\}$ and $\Phi_b=\{\phi_3,\phi_4\}$. For both $s_0$ and $s_0'$, all successors are drawn. In each successor of $s_0$, $\smldia\phi_i$ holds for some $i$. Note that the same $\smldia\phi_i$ may hold in multiple successors, such as $\smldia\phi_1$ in $t_1$ and $t_2$.
Each successor of $s_0'$ corresponds either to a successor of $s_0$ or to some $\phi_i\in \Phi_a^-\cup \Phi_b^-$. Dotted lines represent the relation $Z$. Note that every successor of $s_0$ is $Z$-related to some successor of $s_0'$ but not vice versa, so $Z$ is a simulation but not a bisimulation.}
\label{fig:M'-construction2}
\end{figure}

\begin{restatable}{proposition}{propROSMLsound}
\label{prop:ROSMLsoundness}
The axiomatization {\bf ROSML} is sound for $\lang$.
\end{restatable}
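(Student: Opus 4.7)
The plan is to reduce to the already-established soundness results. Since $\mathbf{ROSML} = \mathbf{ML} + \mathbf{REF} + \mathbf{ORI} + \mathbf{SIM}$, and Propositions~\ref{prop:RMLsoundness} and \ref{prop:OMLsoundness} already establish soundness of $\mathbf{REF}$ and $\mathbf{ORI}$ (with $\mathbf{ML}$ standard), it suffices to prove soundness of the $\mathbf{SIM}$ axioms $\mathbf{SQ1}$--$\mathbf{SQ4}$. Axioms $\mathbf{SQ1}$, $\mathbf{SQ2}$, $\mathbf{SQ3}$ are handled by direct analogues of the arguments for $\mathbf{RQ1}$, $\mathbf{RQ2}$, $\mathbf{RQ3}$ in Proposition~\ref{prop:RMLsoundness}: $\mathbf{SQ1}$ and $\mathbf{SQ3}$ rely on the preservation of atoms under simulations (so propositional formulas are invariant), while $\mathbf{SQ2}$ follows from $\smldia$ distributing over disjunction as a diamond-like operator.

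The principal task is the soundness of $\mathbf{SQ4}$. For the forward direction, I would suppose that $Z : M_s \simulates M'_{s'}$ witnesses $M_s \models \smldia \Et_{a \in A} \cover_a \Phi_a$. The conjunct $\knows_a \Vel_{\phi \in \Phi_a} \smldia \phi$ is obtained exactly as in the left-to-right direction of Proposition~\ref{prop:SMLsoundness}: apply \textbf{forth} to push each $a$-successor $t$ of $s$ to a $Z$-related successor $t'$ of $s'$, and use the cover at $s'$ to produce some $\phi \in \Phi_a$ with $M'_{t'} \models \phi$, hence $M_t \models \smldia \phi$. For the novel conjunct $\originbox \Et_{\phi \in \Phi_a} \suspects_a \rmldia \phi$, the key observation is that the cover at $s'$ forces each $\phi \in \Phi_a$ to be satisfiable by some pointed model. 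By Proposition~\ref{prop:MI2ref}, any such witness is a refinement of $M^\circ$ at the appropriate valuation; since $R^\circ_a$ is total there is always an $a$-successor in $M^\circ$ from which that refinement is reachable. Hence $M^\circ_{V(s)} \models \suspects_a \rmldia \phi$ for each $\phi \in \Phi_a$, which is exactly $M_s \models \originbox \Et_{\phi \in \Phi_a} \suspects_a \rmldia \phi$.

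The converse direction mirrors the right-to-left proof of Proposition~\ref{prop:SMLsoundness}, with one crucial substitution: where that proof appealed to the external consistency of each $\phi \in \Phi_a$, the conjunct $\originbox \Et_{\phi \in \Phi_a} \suspects_a \rmldia \phi$ now supplies an internal certificate of satisfiability. Unfolding the semantics of $\originbox$, $\suspects_a$, and $\rmldia$ in turn, this conjunct yields for every $\phi \in \Phi_a$ a pointed model $M^\phi_{t_\phi}$ with $M^\phi_{t_\phi} \models \phi$. These play the role of the witnesses for $\phi \in \Phi_a^-$, after which the construction of $M'$ proceeds verbatim as in Proposition~\ref{prop:SMLsoundness}, and the verification that the resulting relation $Z$ is a simulation satisfying \textbf{atoms} and \textbf{forth} carries over unchanged.

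The main obstacle is the careful unwinding of the $\originbox$-$\suspects_a$-$\rmldia$ stack into a usable pointed model satisfying $\phi$, and dually, in the forward direction, the observation that satisfiability of a formula at any model translates via Proposition~\ref{prop:MI2ref} and the totality of $R^\circ_a$ into truth of $\originbox \suspects_a \rmldia \phi$. Once this pair of translations between external satisfiability and the internal origin-refinement expression is secured, the rest of the soundness argument is bookkeeping on the model construction already performed in Proposition~\ref{prop:SMLsoundness}.
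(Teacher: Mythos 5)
Your proposal is correct and follows essentially the same route as the paper: reduce everything to the previously established soundness of \textbf{ML}, \textbf{REF}, \textbf{ORI} and \textbf{SQ1}--\textbf{SQ3}, and for \textbf{SQ4} derive the $\knows_a$-conjunct from the ${\bf SQ4}_{\mathsf{cons}}$ argument while translating between satisfiability of each $\phi\in\Phi_a$ and the truth of $\originbox\suspects_a\rmldia\phi$ via Proposition~\ref{prop:MI2ref} and the totality of $R^\circ_a$. The only (immaterial) difference is that the paper invokes the soundness of ${\bf SQ4}_{\mathsf{cons}}$ as a black box in both directions, whereas you re-run its witness construction explicitly.
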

\begin{proof}
${\bf ROSML} = {\bf ML} + {\bf REF} + {\bf ORI} + {\bf SIM}$. Soundness of {\bf ML} is well known, and we showed soundness of {\bf REF}, {\bf ORI} and {\bf SQ1}--{\bf SQ3} previously. This leaves only the soundness of {\bf SQ4}.

For the left-to-right direction, suppose that $M_s\models \smldia\bigwedge_{a\in A}\cover_a\Phi_a$. Then each $\phi\in \Phi_a$ is consistent, so by the soundness of ${\bf SQ4_{cons}}$ we have $M_s\models \Et_{a\in A}\square_a\bigvee_{\phi\in \Phi_a}\smldia\phi$. Furthermore, because $\phi$ is consistent, there are $M^\phi$ and $s_\phi$ such that $M^\phi_{s_\phi}\models \phi$. 

Every pointed model is the refinement of the state of $M^\circ$ with the same valuation, so we have $M^\circ_{V(s_\phi)}\refines M^\phi_{s_{\phi}}$. It follows that $M^\circ_{V(s_\phi)}\models \rmldia \phi$. In turn, that implies that $M^\circ_{V(s)}\models \lozenge_a\rmldia\phi$, since every state of $M^\circ$ is an $a$-successor of every other state. This is true for every $\phi\in \Phi_a$, so $M^\circ_{V(s)}\models\Et_{\phi\in\Phi_a}\lozenge_a\rmldia\phi$, which implies that $M_s\models \originbox \Et_{\phi\in\Phi_a}\lozenge_a\rmldia\phi$. Finally, this is true for every $a\in A$, which together with the previous conclusion that  $M_s\models \Et_{a\in A}\square_a\bigvee_{\phi\in \Phi_a}\smldia\phi$ implies that
\[M_s\models \Et_{a\in A}(\square_a\bigvee_{\phi\in \Phi_a}\smldia\phi\wedge \originbox \Et_{\phi\in \Phi_a}\lozenge_a\rmldia\phi).\]
We have now shown that
\[\models \smldia\Et_{a\in A}\cover_a\Phi_a\rightarrow \Et_{a\in A}(\square_a\bigvee_{\phi\in \Phi_a}\smldia\phi\wedge \originbox \Et_{\phi\in \Phi_a}\lozenge_a\rmldia\phi).\]
For the right-to-left direction, suppose that $M_s\models \Et_{a\in A}(\square_a\bigvee_{\phi\in \Phi_a}\smldia\phi\wedge \originbox \Et_{\phi\in \Phi_a}\lozenge_a\rmldia\phi)$. Then $M_s\models \Et_{a\in A}\square_a\bigvee_{\phi\in \Phi_a}\smldia\phi$ and, for every $a\in A$, $M_s\models \originbox\Et_{\phi\in \Phi_a}\lozenge_a\rmldia\phi$. The latter implies that $\phi$ holds in some refinement of some state of $M^\circ$, so, in particular, $\phi$ is satisfiable.
From the soundness of ${\bf SQ4_{cons}}$, it then follows that $M_s\models \smldia\Et_{a\in A}\cover_a\Phi_a$. We have now shown that
\[\models \Et_{a\in A}(\square_a\bigvee_{\phi\in \Phi_a}\smldia\phi\wedge \originbox \Et_{\phi\in \Phi_a}\lozenge_a\rmldia\phi)\rightarrow \smldia\Et_{a\in A}\cover_a\Phi_a.\]
We already showed the other direction, so we have 
\[\models \smldia\Et_{a\in A}\cover_a\Phi_a\leftrightarrow \Et_{a\in A}(\square_a\bigvee_{\phi\in \Phi_a}\smldia\phi\wedge \originbox \Et_{\phi\in \Phi_a}\lozenge_a\rmldia\phi).\]
This shows that {\bf SQ4} is sound, which completes the soundness proof for {\bf ROSML}.
\end{proof}

\subsection{Completeness}


Having shown the soundness of the reduction axioms for \textbf{OML}, \textbf{RML}, \textbf{SML}, and \textbf{ROSML}, we can now establish the completeness of the corresponding axiomatizations. In all of the following completeness proofs, the general idea is to use the reduction axioms to translate a given formula of one of the new logics into an equivalent formula of \textbf{ML}, which is known to be complete. 


\begin{restatable}{proposition}{propRMLcomp}\label{prop:RMLComplete}
The axiomatization \textbf{RML} is complete for $\lang^{\Box\refines}$.
\end{restatable}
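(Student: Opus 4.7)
The plan is to establish completeness by reducing every $\phi \in \lang^{\Box\refines}$ to a provably equivalent formula of $\lang^\Box$, after which the known completeness of \textbf{ML} closes the argument. I would define, by induction on formula structure, a translation $t: \lang^{\Box\refines} \to \lang^\Box$ together with a proof that $\vdash_{\textbf{RML}} \phi \leftrightarrow t(\phi)$. The Boolean and $\Box_a$ clauses translate homomorphically via \textbf{RE}; the only interesting clause is $t(\rmldia\psi)$, which by an inner induction allows us to assume $\psi \in \lang^\Box$.

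The engine is a \emph{cover normal form} lemma, provable in \textbf{ML}: every $\psi \in \lang^\Box$ is provably equivalent to a disjunction $\bigvee_i \bigl(\phi_0^i \wedge \bigwedge_{a \in A} \cover_a \Phi_a^i\bigr)$, where each $\phi_0^i \in \lang_0$ and every formula in each $\Phi_a^i$ has strictly smaller modal depth than $\psi$. This is standard: one puts the conjunction of $\Box_a$ and $\Dia_a$ subformulas for each agent into a single $\cover_a$ set by combining the $\Box_a$-argument with each $\Dia_a$-argument, using \textbf{RE}. Given $\psi$ in this form, I would apply \textbf{RQ2} to distribute $\rmldia$ across the disjunction, then \textbf{RQ3} to peel off the propositional conjunct of each disjunct, then \textbf{RQ1} to eliminate $\rmldia$ on the propositional part, and finally \textbf{RQ4} to rewrite $\rmldia\bigwedge_{a \in A}\cover_a \Phi_a^i$ as $\bigwedge_{a \in A}\bigwedge_{\phi \in \Phi_a^i}\Dia_a \rmldia\phi$. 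Every remaining $\rmldia$ now applies to a formula of strictly smaller modal depth than $\psi$, so induction on $d(\psi)$ closes.

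The main obstacle is the bookkeeping for termination and the edge cases of the normal form. One needs a well-founded complexity measure, e.g.\ lexicographic on $(d(\psi), \text{size of }\psi)$ for the outermost $\rmldia\psi$, and verify that each rewrite step strictly decreases it; special care is needed when some agent $a$ contributes no $\Dia_a$-successor conjunct (handle via a separate disjunct in which $\Box_a\bot$ holds, dealt with purely by \textbf{RQ1}--\textbf{RQ3}) or when $\Phi_a^i$ would be empty (insert a vacuous $\cover_a\{\top\}$ under $\Dia_a\top$). A secondary induction on the maximum $\rmldia$-nesting handles nested refinement quantifiers.

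Once $\vdash_{\textbf{RML}} \phi \leftrightarrow t(\phi)$ is established with $t(\phi) \in \lang^\Box$, completeness follows routinely: assume $\models \phi$; then by soundness of \textbf{RML} (Proposition~\ref{prop:RMLsoundness}) we have $\models t(\phi)$, whence $\vdash_{\textbf{ML}} t(\phi)$ by completeness of \textbf{ML}, hence $\vdash_{\textbf{RML}} t(\phi)$, and finally $\vdash_{\textbf{RML}} \phi$ by modus ponens on the derived equivalence.
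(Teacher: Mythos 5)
Your proposal is correct and follows essentially the same route as the paper: rewrite the argument of $\rmldia$ into a disjunction of (propositional part) $\wedge$ (cover normal form), then push $\rmldia$ inward with \textbf{RQ2}, \textbf{RQ3}, \textbf{RQ1} and \textbf{RQ4}, and close by induction on modal depth plus completeness of \textbf{ML}. If anything, you are more explicit than the paper about the termination measure and the degenerate cases where some agent has no $\Dia_a$-conjunct (where $\Box_a\theta$ must be split as $\Box_a\bot \vee \cover_a\{\theta\}$, the first disjunct being $\cover_a\emptyset$ and handled by \textbf{RQ4} with an empty conjunction rather than by \textbf{RQ1}--\textbf{RQ3}), which the paper glosses over.
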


\begin{proof}
	Given $\rmldia \varphi$, the proof is by induction on the modal depth of $\varphi$. For the base case, $d(\varphi) = 0$, and hence $\rmldia \phi_0$ is equivalent to $\phi_0$ by \textbf{RQ1}. Now assume that we have $\rmldia \varphi$ with $d(\phi) = n$, and all $\rmldia \psi$ with $d(\psi) < n$ are equivalent to some formulas without refinement quantifiers. W.l.o.g. we may also assume that $\varphi$ is a combination of propositional atoms and formulas of the type $\Box_a \chi_1$ and $\Diamond_a \chi_2$. By \textbf{RE}, we can substitute $\phi$ with an equivalent formula in the disjunctive normal form:
	$$
\bigvee \left(
\begin{array}{l}
p_{11} \land ... \land p_{1m} \land \bigwedge_{a \in ag(\phi)}(\Diamond_a \chi_1 \land ... \land \Diamond_a \chi_l \land \Box_a \theta_1 \land ... \land \Box_a \theta_k) \\
...\\
p_{n1} \land ... \land p_{nm} \land \bigwedge_{a \in ag(\phi)}(\Diamond_a \chi_1 \land ... \land \Diamond_a \chi_l \land \Box_a \theta_1 \land ... \land \Box_a \theta_k)
\end{array}
\right),
$$   
where $ag(\phi)$ is the set of agents appearing in $\phi$.
By \textbf{RQ2}, we have that $\rmldia \phi$ is equivalent to 
	$$
\bigvee \left(
\begin{array}{l}
\rmldia (p_{11} \land ... \land p_{1m} \land \bigwedge_{a \in ag(\phi)}(\Diamond_a \chi_1 \land ... \land \Diamond_a \chi_l \land \Box_a \theta_1 \land ... \land \Box_a \theta_k)) \\
...\\
\rmldia(p_{n1} \land ... \land p_{nm} \land \bigwedge_{a \in ag(\phi)}(\Diamond_a \chi_1 \land ... \land \Diamond_a \chi_l \land \Box_a \theta_1 \land ... \land \Box_a \theta_k))
\end{array}
\right).
$$  
Observe that $\Diamond_a \chi_1 \land ... \land \Diamond_a \chi_l \land \Box_a \theta_1 \land ... \land \Box_a \theta_k$ is equivalent (via {\bf K}) to $\cover_a \Phi_a$ with $\Phi_a = \{\chi_1\land\bigwedge_{i=1}^k\theta_k, ..., \chi_l\land\bigwedge_{i=1}^k\theta_k, \bigwedge_{i=1}^k\theta_k\}$. 
This follows from the validities $(\Box_a\phi\land\Diamond_a\psi)\rightarrow \Diamond_a(\phi\land\psi)$:
$$
\begin{tabular}{lll}
1. & $\Box_a(\phi\rightarrow\lnot\psi)\rightarrow(\Box_a\phi \rightarrow\Box_a\lnot\psi)$ & \textbf{K}, \textbf{RE}\\
2. & $\Box_a(\lnot\phi\lor\lnot\psi)\rightarrow (\lnot\Box_a\phi\lor\lnot\Diamond_a\psi)$ & \textbf{K}, \textbf{RE}, \textbf{Prop}\\
3. & $(\Box_a\phi\land\Diamond_a\psi)\rightarrow \Diamond_a(\phi\land\psi)$ & \textbf{RE}, \textbf{Prop}
\end{tabular}
$$
as well as $(\Box_a\phi\land\Box_a\psi)\leftrightarrow\Box_a(\phi\land\psi)$, which is a well known validity of \textbf{ML}.

Then, by \textbf{RQ3}, we have that 
	$$
\bigvee \left(
\begin{array}{l}
p_{11} \land ... \land p_{1m} \land \rmldia\bigwedge_{a \in ag(\phi)}\cover_a \Phi_a\\
...\\
p_{n1} \land ... \land p_{nm} \land \rmldia\bigwedge_{a \in ag(\phi)}\cover_a \Phi_a
\end{array}
\right),
$$  
which, in turn, is equivalent, by \textbf{RQ4}, to 
	$$
\bigvee \left(
\begin{array}{l}
p_{11} \land ... \land p_{1m} \land \bigwedge_{a \in ag(\phi)}\bigwedge_{\chi\in \Phi_a}\Diamond_a \rmldia\chi\\
...\\
p_{n1} \land ... \land p_{nm} \land \bigwedge_{a \in ag(\phi)}\bigwedge_{\chi\in \Phi_a}\Diamond_a \rmldia\chi
\end{array}
\right).
$$  
The modal depth of all $\chi$ is at most $n-1$, and hence, by the induction hypothesis, for each $\rmldia \chi$ there is an equivalent quantifier-free formula.
	\end{proof}

With the similar approach as in the proof of Proposition \ref{prop:RMLComplete}, we can show the completeness of \textbf{SML}.

\begin{restatable}{proposition}{propSMLcomp}
\label{prop:SMLComplete}
The axiomatization \textbf{SML} is complete for $\lang^{\Box\simulates}$.
\end{restatable}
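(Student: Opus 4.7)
The plan is to mirror the proof of Proposition~\ref{prop:RMLComplete}, establishing that every formula of $\lang^{\Box\simulates}$ is provably equivalent in \textbf{SML} to an $\smldia$-free formula, after which completeness follows from completeness of \textbf{ML}. The argument proceeds by induction on the modal depth $d(\phi)$ of the argument of $\smldia\phi$, treating the outer Boolean structure with standard propositional reasoning together with \textbf{RE}.

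For the base case, $d(\phi)=0$, axiom \textbf{SQ1} directly rewrites $\smldia\phi_0$ as $\phi_0 \in \lang_0 \subseteq \lang^\Box$. For the inductive step, assume the claim for all $\smldia\psi$ with $d(\psi)<n$, and let $d(\phi)=n$. Using \textbf{RE}, put $\phi$ in a disjunctive normal form whose conjuncts are propositional literals, $\Diamond_a\chi$-formulas, and $\Box_a\theta$-formulas, as in the RML proof. Push $\smldia$ inside the disjunction by \textbf{SQ2}, and extract the purely propositional conjuncts with \textbf{SQ3}. Then, exactly as for \textbf{RML}, use $(\Box_a\phi\wedge\Diamond_a\psi)\imp\Diamond_a(\phi\wedge\psi)$ together with $(\Box_a\phi\wedge\Box_a\psi)\eq\Box_a(\phi\wedge\psi)$ to rewrite each per-agent block $\bigwedge_j\Diamond_a\chi_j \wedge \bigwedge_k\Box_a\theta_k$ as $\cover_a\Phi_a$ for an appropriate set $\Phi_a$. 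The resulting sub-term has the shape $\smldia\bigwedge_{a\in A}\cover_a\Phi_a$.

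The main obstacle is that, unlike \textbf{RQ4}, the reduction axiom $\mathbf{SQ4}_{\mathsf{cons}}$ carries the side condition that every $\phi\in\Phi_a$ be consistent. This is where the argument has to depart from the RML proof. Each such $\phi$ has modal depth strictly less than $n$, so by the induction hypothesis $\smldia\phi$ is provably equivalent to an $\lang^\Box$-formula, and in fact $\phi$ itself is already an $\lang^\Box$-formula by the shape of the normal form; completeness of \textbf{ML} for $\lang^\Box$ then equates consistency with satisfiability. We split on this: if some $\phi\in\Phi_a$ is inconsistent, then $\neg\Diamond_a\phi$ is derivable (by \textbf{N} applied to $\neg\phi$ and \textbf{K}), so $\neg\cover_a\Phi_a$ is derivable, and hence so is $\neg\bigwedge_{a\in A}\cover_a\Phi_a$; combining with the derived monotonicity rule for $\smldia$ (the analogue of \textbf{AR}, proved as in the RML section) and \textbf{SQ1}, which yields $\neg\smldia\bot$, we conclude that $\smldia\bigwedge_{a\in A}\cover_a\Phi_a$ is provably equivalent to $\bot\in\lang^\Box$. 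Otherwise, every $\phi\in\Phi_a$ is consistent and $\mathbf{SQ4}_{\mathsf{cons}}$ legitimately applies, reducing the formula to $\bigwedge_{a\in A}\Box_a\Vel_{\phi\in\Phi_a}\smldia\phi$.

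Finally, each $\smldia\phi$ appearing in the last expression has $d(\phi)<n$, so by the induction hypothesis it is \textbf{SML}-provably equivalent to an $\lang^\Box$-formula; substituting via \textbf{RE} yields an $\lang^\Box$-formula provably equivalent to the original $\smldia\phi$. Thus every $\lang^{\Box\simulates}$-formula is \textbf{SML}-equivalent to an $\lang^\Box$-formula $\phi^\ast$, and if the original formula is valid, then so is $\phi^\ast$; by completeness of \textbf{ML}, $\phi^\ast$ is \textbf{ML}-derivable, and therefore \textbf{SML}-derivable. The delicate point throughout is the consistency-splitting case: it is what makes the argument meta-theoretic rather than a purely syntactic rewriting, and it is the price paid for the compactness of $\mathbf{SQ4}_{\mathsf{cons}}$ compared with \textbf{SQ4}.
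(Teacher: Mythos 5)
Your proposal follows the same route as the paper: induction on modal depth, disjunctive normal form, and pushing $\smldia$ inward via \textbf{SQ1}--\textbf{SQ3} and ${\bf SQ4}_{\mathsf{cons}}$, reducing to \textbf{ML}. The one place you genuinely add something is the consistency split. The paper's proof simply says to apply ${\bf SQ4}_{\mathsf{cons}}$ and leaves the side condition implicit, whereas the axiom is only available when every $\phi\in\Phi_a$ is a consistent $\lang^\Box$ formula; your explicit case analysis --- deriving $\neg\Dia_a\phi$ from an inconsistent $\phi$ by \textbf{N} and \textbf{K}, hence $\neg\Et_{a\in A}\cover_a\Phi_a$, and then collapsing $\smldia\Et_{a\in A}\cover_a\Phi_a$ to $\bot$ via the derived monotonicity rule for $\smldia$ together with \textbf{SQ1} --- is exactly the step needed to make the reduction total, and it is correct. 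You are also right that this is where the argument becomes meta-theoretic (it invokes \textbf{ML}-completeness to decide which branch applies), which is the cost of ${\bf SQ4}_{\mathsf{cons}}$ that the paper itself flags when motivating \textbf{ROSML}. The only minor gloss is your remark that each $\phi\in\Phi_a$ "is already an $\lang^\Box$-formula by the shape of the normal form"; strictly, inner occurrences of $\smldia$ of lower depth must first be eliminated by the induction hypothesis before ${\bf SQ4}_{\mathsf{cons}}$ (which is stated for $\lang^\Box$ formulas) may be invoked, but the paper makes the same tacit assumption in the \textbf{RML} proof, so this is not a defect relative to the source.
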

\begin{proof}
	The proof goes, \textit{mutatis mutandis}, exactly like the proof of Proposition \ref{prop:RMLComplete}. We use the induction on the modal depth of $\phi$ to show that  $\phi$, which may have simulation quantifiers, can be equivalently rewritten into a formula of modal logic. For the induction step $\smldia \phi$, we also substitute $\phi$ with an equivalent formula in disjunctive normal form, and then push $\smldia$ inside using $\textbf{SQ2}$, $\textbf{SQ3}$, and $\textbf{SQ4}_{\mathsf{cons}}$. As a consequence, formulas within the scope of $\smldia$ have lower modal depth, and the result follows from the induction hypothesis.
\end{proof}

The proof of completeness for \textbf{OML} is also a proof by elimination, in that we inductively show that every formula of $\lang^{\Box\circ}$ is provably (in \textbf{OML}) equivalent to a formula of $\lang^\Box$.

\begin{restatable}{proposition}{propOMLcomp}
\label{prop:OMLComplete}
The axiomatization \textbf{OML} is complete for $\lang^{\Box\circ}$.
\end{restatable}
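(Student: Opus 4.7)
The plan, following the pattern of Propositions~\ref{prop:RMLComplete} and~\ref{prop:SMLComplete}, is to show that every formula of $\lang^{\Box\circ}$ is provably equivalent in \textbf{OML} to a formula of $\lang^\Box$; completeness of \textbf{ML} then lifts to \textbf{OML}. I would proceed by an outer structural induction on $\phi \in \lang^{\Box\circ}$: the Boolean and $\Box_a$ cases propagate the hypothesis in the routine way via \textbf{RE}, so the only nontrivial case is $\phi = \originbox\psi$, where the outer IH first lets me assume $\psi \in \lang^\Box$, reducing the task to eliminating the leading $\originbox$ from formulas $\originbox\psi$ with $\psi \in \lang^\Box$.

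The heart of the argument is an inner induction on $\psi \in \lang^\Box$ establishing the claim: there exists a propositional $\psi^* \in \lang_0$ with $\vdash_{\mathbf{OML}} \originbox(\psi \leftrightarrow \psi^*)$. This suffices, since \textbf{O1} yields $\vdash \originbox\psi^* \leftrightarrow \psi^*$, and a short derivation from \textbf{ODual} and \textbf{ODisj} shows that $\originbox$ commutes with every Boolean connective (including $\leftrightarrow$), giving $\vdash \originbox\psi \leftrightarrow \psi^*$ with $\psi^* \in \lang_0 \subseteq \lang^\Box$. The Boolean cases of the inner induction are immediate from the IH together with \textbf{ODual}, \textbf{ODisj}, and \textbf{RE}. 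The crucial case is $\psi = \Box_a\psi_1$: the IH furnishes a propositional $\psi_1^*$ with $\vdash \originbox(\psi_1 \leftrightarrow \psi_1^*)$, and combining derived $\originbox$-necessitation, axiom \textbf{ON}, and $\Box_a$-normality under $\originbox$ (obtained by applying derived $\originbox$-necessitation to axiom \textbf{K} and then \textbf{OMP}) yields $\vdash \originbox(\Box_a\psi_1 \leftrightarrow \Box_a\psi_1^*)$. One then splits on whether the propositional formula $\psi_1^*$ is a tautology: if it is, \textbf{N} followed by derived $\originbox$-necessitation gives $\vdash \originbox\Box_a\psi_1^*$ and we take $\psi^* := \top$; if not, pick a consistent literal conjunction $\phi_0$ in the atoms of $\psi_1^*$ with $\phi_0 \to \neg\psi_1^*$ a tautology, apply \textbf{OFull} to obtain $\vdash \originbox\Diamond_a\phi_0$, and use $\Box_a$-normality under $\originbox$ on the tautology $\phi_0 \to \neg\psi_1^*$ to conclude $\vdash \originbox\Diamond_a\neg\psi_1^*$, equivalently $\vdash \neg\originbox\Box_a\psi_1^*$, so we take $\psi^* := \bot$.

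The principal obstacle is exactly this $\Box_a$ case: it demands simulating, inside $\originbox$, the fact that $M^\circ$ is a universal S5 model containing a state for every valuation, so that any $\Box_a\psi_1$ collapses to a Boolean constant, namely whether the propositional reduct $\psi_1^*$ is a tautology. The key tools are derived $\originbox$-necessitation together with \textbf{ON} (to push $\originbox$ through $\Box_a$) and \textbf{OFull} (to witness falsifying valuations when the reduct is not a tautology); the remaining axioms \textbf{OT}, \textbf{O5}, \textbf{OExch} reinforce this S5-with-identified-agents behavior inside $\originbox$.
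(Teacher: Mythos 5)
Your proof is correct, but it takes a genuinely different route from the paper's. The paper first argues that every multi-agent S5 validity is derivable inside $\originbox$ (via \textbf{OT}, \textbf{O5}, \textbf{OMP}, \textbf{ON}), then performs an induction on modal depth that pushes $\originbox$ through a disjunctive normal form down to formulas $\originbox\Diamond_a(\pi\land\dots)$, extracts the nested modal conjuncts agent by agent using S5 reasoning under $\originbox$ together with \textbf{OExch}, and only at the very end decides the residual $\originbox\Diamond_a\pi$ by \textbf{OFull} or propositional inconsistency; its induction invariant merely asserts equivalence to \emph{some} $\lang^\Box$ formula. You instead maintain the strictly stronger invariant that every $\psi\in\lang^\Box$ has a \emph{propositional} reduct $\psi^*$ with $\vdash\originbox(\psi\leftrightarrow\psi^*)$, which makes the $\Box_a$ case collapse immediately: over the universal model $M^\circ$, $\Box_a$ of a propositional formula is a constant, decided by tautology-checking of $\psi^*$, with \textbf{OFull} supplying the falsifying witness and \textbf{ON} plus derived $\originbox$-necessitation supplying normality of $\Box_a$ under $\originbox$. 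This buys a cleaner argument that avoids the normal-form manipulation and the agent-cycling via \textbf{OExch} entirely, and as a by-product shows that \textbf{OT}, \textbf{O5} and \textbf{OExch} are not needed for the reduction (hence are derivable from the remaining axioms); what it gives up is that the paper's proof exercises and thereby motivates the full axiom set, and its weaker invariant matches the statement of the other completeness proofs it is modelled on. Both arguments rest on the same two semantic facts --- totality of $R^\circ_a$ and realization of every valuation in $M^\circ$ --- so the difference is one of proof architecture, not of substance.
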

\begin{proof}
The structure of the model $M^\circ$ has every relation $R^\circ_a$ as a total relation, so it is an equivalence class. 
This is reflected in the axiomatization $\textbf{OML}$ where we have the axioms \textbf{T} ($\Box_a\phi\rightarrow\phi$)
and \textbf{5} ($\Diamond_a\phi\rightarrow\Box_a\Diamond_a\phi$) inside the $\originbox$ operator as the axioms \textbf{OT} and \textbf{O5}.
Similarly, we have the necessitation rule and modus ponens rule in the scope of the $\originbox$ operator with the rules \textbf{OMP} and \textbf{ON}.
Therefore, for every formula $\phi$ that is valid for multi-agent S5 frames, there is a proof in \textbf{OML} of $\originbox\phi$.

We proceed to the elimination argument as follows. 
We work by induction, where the induction hypothesis is \textit{for all $\lang^\Box$ formulas $\phi$ with modal depth less than or equal to $n$, $\originbox{\phi}$ is provably equivalent to a formula of $\lang^{\Box}$}.
The base follows directly from \textbf{O1}. 
Now we assume that the induction hypothesis holds for $n$.

Applying the axioms \textbf{O1}, \textbf{ODual} and \textbf{ODisj} we can move the $\originbox$ operators up to the modalities.
That is
$$\originbox\phi\leftrightarrow\theta[x_i\backslash\originbox\Diamond_{a_i}\phi_i]_{i=1}^m$$
is a validity where $\theta\in\lang_0$ is a propositional formula.
Furthermore, noting that $\Diamond_a$ commutes with disjunctions 
($\Diamond_a(\phi\lor\psi)\leftrightarrow(\Diamond_a\phi\lor\Diamond_a\psi)$ is provable in \textbf{ML}), we can extend this result to say
$$\originbox\phi\leftrightarrow\theta[x_i\backslash\originbox\Diamond_{a_i}\bigwedge_{j=1}^k\chi^i_j]_{i=1}^m$$
where $\chi_j^i$ is either a literal (an atomic proposition or its negation), or a modal formula of one of the forms $\Box_b\xi$, or $\Diamond_b\xi$.

So, via some basic \textbf{ML} reasoning, it is now sufficient for us to show that formulas of the type 
\begin{equation}\label{eq:target}
\originbox\Diamond_a(\pi\land\bigwedge_{b\in A}\bigwedge_{i=1}^{\ell_b}\Diamond_b\chi^b_i\land\Box_b\bigwedge_{j=1}^{k_b}\xi^b_j)
\end{equation}
are equivalent to $\lang^\Box$ formulas, where $\pi$ is a conjunction of literals, and $\chi^b_i$ and $\xi^b_i$ are $\lang^\Box$ formulas of modal depth at most $n$.

By S5 reasoning, we have:
$$
\begin{array}{c}
\Diamond_a(\pi\land\bigwedge_{b\in A}\bigwedge_{i=1}^{\ell_b}\Diamond_b\chi^b_i\land\Box_b\bigwedge_{j=1}^{k_b}\xi^b_j)\\
\leftrightarrow\\
\Diamond_a(\pi\land\bigwedge_{b\in A\backslash\{a\}}\bigwedge_{i=1}^{\ell_b}\Diamond_b\chi^b_i\land\Box_b\bigwedge_{j=1}^{k_b}\xi^b_j)
\land
\bigwedge_{i=1}^{\ell_a}\Diamond_a\chi^a_i\land\Box_a\bigwedge_{j=1}^{k_a}\xi^a_j
\end{array}
$$
Therefore, by the above observation that all S5 validities are derivable in the scope of the $\originbox$ operator, and \textbf{ODual} with \textbf{ODisj}, we have:
$$
\begin{array}{c}
\originbox\Diamond_a(\pi\land\bigwedge_{b\in A}\bigwedge_{i=1}^\ell\Diamond_b\chi^b_i\land\Box_b\bigwedge_{j=1}^k\xi^b_j)\\
\leftrightarrow\\
\originbox\Diamond_a(\pi\land\bigwedge_{b\in A\backslash\{a\}}\bigwedge_{i=1}^\ell\Diamond_b\chi^b_i\land\Box_b\bigwedge_{j=1}^k\xi^b_j)
\land
\bigwedge_{i=1}^\ell\originbox\Diamond_a\chi^a_i\land\originbox\Box_a\bigwedge_{j=1}^k\xi^a_j
\end{array}
$$
As for all $i$, $\Diamond_a\chi^a_i$ and $\Box_a\xi^a_i$ have modal depth less than or equal to $n$, we can apply the induction 
hypothesis, so that they are provably equivalent to $\lang^\Box$ formulas. Therefore we have 
$$
\begin{array}{c}
\originbox\Diamond_a(\pi\land\bigwedge_{b\in A}\bigwedge_{i=1}^\ell\Diamond_b\chi^b_i\land\Box_b\bigwedge_{j=1}^k\xi^b_j)\\
\leftrightarrow\\
\originbox\Diamond_a(\pi\land\bigwedge_{b\in A\backslash\{a\}}\bigwedge_{i=1}^\ell\Diamond_b\chi^b_i\land\Box_b\bigwedge_{j=1}^k\xi^b_j)
\land \phi_a
\end{array}
$$
where $\phi_a$ is a $\lang^\Box$ formula. We can now apply \textbf{OExch}, noting that is is an equivalence, to derive:
\begin{equation}\label{eq:recurse}
\begin{array}{c}
\originbox\Diamond_a(\pi\land\bigwedge_{b\in A}\bigwedge_{i=1}^\ell\Diamond_b\chi^b_i\land\Box_b\bigwedge_{j=1}^k\xi^b_j)\\
\leftrightarrow\\
\originbox\Diamond_b(\pi\land\bigwedge_{b\in A\backslash\{a\}}\bigwedge_{i=1}^\ell\Diamond_b\chi^b_i\land\Box_b\bigwedge_{j=1}^k\xi^b_j)
\land \phi_a
\end{array}
\end{equation}

Noting the similarity of the first line of (\ref{eq:recurse}) and the first conjunct of the second line of (\ref{eq:recurse}), where $b\in A\backslash\{a\}$, we can recursively apply this argument for each modality $a\in A$ to derive 
$$
\originbox\Diamond_a(\pi\land\bigwedge_{b\in A}\bigwedge_{i=1}^\ell\Diamond_b\chi^b_i\land\Box_b\bigwedge_{j=1}^k\xi^b_j)
\leftrightarrow
\originbox\Diamond_a \pi\land\bigwedge_{a\in A}\phi_a
$$
Here, the $a$ in $\originbox\Diamond_a\pi$ is arbitrary. It will be the last modality left in the set $A$, but via the \textbf{OExch} axiom it can be changed to any modality.
As $\pi$ is a conjunction of literals, either it contains an atom and its negation, in which case we have:
$$
\originbox\Diamond_a(\pi\land\bigwedge_{b\in A}\bigwedge_{i=1}^\ell\Diamond_b\chi^b_i\land\Box_b\bigwedge_{j=1}^k\xi^b_j)
\leftrightarrow \bot
$$
or it matches the condition for \textbf{OFull}, in which case we can derive $\originbox\Diamond_a\pi$ so
$$
\originbox\Diamond_a(\pi\land\bigwedge_{b\in A}\bigwedge_{i=1}^\ell\Diamond_b\chi^b_i\land\Box_b\bigwedge_{j=1}^k\xi^b_j)
\leftrightarrow \bigwedge_{a\in A}\phi_a
$$
This is sufficient to show $\lang^{\Box\circ}$ formulas of the kind in equation (\ref{eq:target}) are equivalent to $\lang^\Box$ formulas
which concludes the completeness proof.
\end{proof}

Finally, having established the way to eliminate refinement and simulation quantifiers, as well as the origin modality, in proofs of Propositions \ref{prop:RMLComplete}, \ref{prop:SMLComplete}, and \ref{prop:OMLComplete}, the completeness of \textbf{ROSML} follows.

\begin{restatable}{proposition}{propROSMLcomp}
\label{prop:ROSMLComplete}
The axiomatization \textbf{ROSML} is complete for $\lang$.
\end{restatable}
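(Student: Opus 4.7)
The plan is to show that every formula of $\lang$ is \textbf{ROSML}-provably equivalent to a formula of $\lang^\Box$; completeness of \textbf{ROSML} then follows from this, together with soundness (Proposition~\ref{prop:ROSMLsoundness}) and completeness of \textbf{ML}. The strategy is to combine the reductions from Propositions~\ref{prop:RMLComplete}, \ref{prop:SMLComplete} and \ref{prop:OMLComplete}, bearing in mind that \textbf{ROSML} contains \textbf{SQ4} rather than $\textbf{SQ4}_{\mathsf{cons}}$, so that eliminating a $\smldia$ re-introduces $\rmldia$ and $\originbox$ at lower modal depth.

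I would proceed by an outer induction on the modal depth $n$ of $\phi$, with an inner structural induction on $\phi$. The base case $n=0$ is handled by bottom-up applications of \textbf{RQ1}, \textbf{SQ1} and \textbf{O1}, which collapse every occurrence of $\rmlbox$, $\smlbox$ and $\originbox$ at depth $0$ to its propositional body. In the inductive step, Boolean connectives and $\Box_a$ are dealt with by the structural IH together with \textbf{RE}. For $\originbox\psi$ and $\rmldia\psi$ I first reduce $\psi$ to a $\lang^\Box$ formula by the structural IH, and then invoke the reductions underlying Propositions~\ref{prop:OMLComplete} and \ref{prop:RMLComplete} respectively; these reductions are entirely derivable in \textbf{ROSML} since the axioms and rules of \textbf{REF} and \textbf{ORI} are present.

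The main case is $\smldia\psi$. Imitating the proof of Proposition~\ref{prop:SMLComplete}, I would first use the structural IH to replace $\psi$ by an equivalent $\lang^\Box$ formula, rewrite it in disjunctive normal form via \textbf{RE}, push $\smldia$ past disjunctions using \textbf{SQ2}, pull propositional literals out of its scope using \textbf{SQ3}, and bundle the remaining modal conjuncts into a cover $\bigwedge_{a\in A}\cover_a\Phi_a$ exactly as in the proof of Proposition~\ref{prop:RMLComplete}. Applying \textbf{SQ4} then produces a formula built from (i) $\smldia\chi$ with $\chi\in\lang^\Box$ of modal depth $<n$, (ii) $\rmldia\chi$ with the same property, and (iii) one conjunct of the form $\originbox\xi$ where $\xi$ is a propositional combination of $\Diamond_a\rmldia\chi$'s, again with $\chi\in\lang^\Box$ of depth $<n$. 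The outer IH takes care of each $\smldia\chi$, Proposition~\ref{prop:RMLComplete} reduces each $\rmldia\chi$ (including the occurrences buried inside $\xi$) to an equivalent $\lang^\Box$ formula, after which $\xi$ itself lies in $\lang^\Box$ and Proposition~\ref{prop:OMLComplete} finishes the job by reducing $\originbox\xi$ to $\lang^\Box$.

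I expect the main obstacle to be the careful threading of the three reductions: \textbf{SQ4} does not uniformly decrease modal depth, since the $\originbox$-conjunct still wraps a formula of depth $n$. The crucial observation that keeps the induction well-founded is that inside that $\originbox$, the $\rmldia$ modalities apply only to formulas of depth strictly less than $n$, so the \textbf{RML}-reduction can be used to bring the body of $\originbox$ into $\lang^\Box$ before the \textbf{OML}-reduction is invoked. Once the equivalence of every $\phi\in\lang$ with some $\lang^\Box$ formula has been established in this way, an application of \textbf{ML} completeness yields completeness of \textbf{ROSML}.
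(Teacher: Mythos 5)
Your proposal follows essentially the same route as the paper: recursively eliminate $\rmldia$, $\smldia$ and $\originbox$ by combining the reductions of Propositions~\ref{prop:RMLComplete}, \ref{prop:SMLComplete} and \ref{prop:OMLComplete}, then appeal to completeness of \textbf{ML}. In fact you are more explicit than the paper's (very brief) argument about the one genuine subtlety — that \textbf{SQ4} reintroduces $\originbox$ and $\rmldia$ over formulas whose overall depth is not smaller, and that well-foundedness is rescued because the $\rmldia$'s inside the $\originbox$-conjunct bind only formulas of strictly lower depth — so the proposal is correct.
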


\begin{proof}
The proof is similar to those of Propositions \ref{prop:RMLComplete}, \ref{prop:OMLComplete}, and \ref{prop:SMLComplete}. Yet again, we recursively eliminate quantifiers and the origin modality from the inside-out, starting from the innermost occurrence of the operator in question. Hence, for the induction case $\rmldia \phi$ we proceed exactly as in the proof of Proposition \ref{prop:RMLComplete}. For the case $\smldia \phi$, we follow the strategy in the proof of Proposition \ref{prop:SMLComplete}. Finally, for the origin modality we follow the proof of Proposition \ref{prop:OMLComplete}.
\end{proof}

\section{Conclusion and Discussion}

We have presented several novel logics for reasoning about information growth and information loss. Refinement modal logic RML quantifies over information growth. We gave a novel axiomatization for that logic, slightly different from \cite{bozzellietal.inf:2014}. Simulation modal logic SML quantifies over information loss. We presented two axiomatizations for (or containing) SML, \textbf{SML} and \textbf{ROSML}. The first, \text{SML}, has consistency requirements on formulas in one of the axioms and is a variation of \cite{xingetal:2019}. The second, \textbf{ROSML}, does not have that consistency requirement but apart from simulation quantifiers also has refinement quantifiers and the novel origin modality. The origin modality checks whether the formula bound by it was true in the model before any of the agents received any factual information: the mutual factual ignorance model. We also gave an axiomatization \textbf{OML} for origin modal logic OML with only this origin modality but without refinement and simulation quantifiers.   

There is ample scope for further research.
The semantics of the refinement and simulation quantifiers depend on the class of models it quantifies over. In our work, we have considered the class K of all models. 
Hence, an immediate research problem is the axiomatization of \textit{simulation epistemic logic}, i.e.\ SML with simulation modalities quantifying over S5 models, that is, epistemic models with equivalence accessibility relations. The axiomatization of a corresponding \textit{refinement epistemic logic} where refinement modalities quantify over S5 models has been presented in \cite{hales.aiml:2012,halesetal:2011, hales:2016}; these works also contain refinement modal logics over other classes of models, such as KD45 encoding consistent belief. 

Finally, it would be of clear interest to determine the complexity of satisfiability of SML. Let us survey such results for RML. The satisfiability problem of single-agent RML is AEXP$_{pol}$ 
and conjectured to be non-elementary for multi-agent RML \cite{bozzellietal.tcs:2014}. It is slightly unclear if the latter also holds for the RML of our contribution with refinement quantifiers for the set of all agents only. In \cite{achilleosetal:2013} a PSPACE upper bound is given for complexity of the existential fragment of single-agent RML (a negation normal form wherein only diamonds $\rmldia$ may occur). Complexities have also been reported for the logic $\mu$-RML extending RML with fixpoints \cite{bozzellietal.inf:2014}, and in work in progress building on the fixpoint extension \cite{xing22} of covariant-contravariant refinement modal logic \cite{xingetal:2019} (with modalities combining refinement and simulation). 

\paragraph*{Acknowledgements} We thank the TARK reviewers for their comments. Although based in the past of \cite{bozzellietal.inf:2014}, this research really got off the ground during Tim's visit to Hans in Toulouse, in April 2024, followed by subsequent one-on-one meetings of two collaborators, and also involving Rustam and Louwe, and both in Bergen and Prague, while the four of us were never together in the same place and time. Hans van Ditmarsch acknowledges the support of a grant {\em CIMI Mobilit\'e Internationale} from Universit\'e Paul Sabatier to visit Bergen, and support from the {\em Wolfgang Pauli Institute} (WPI) in Vienna where he was a Fellow during the completion of this research.

\bibliographystyle{eptcs}
\bibliography{biblio2024TARK}
\end{document}